\documentclass[pra,11pt,nofootinbib]{revtex4-2}

\usepackage{verbatim}
\usepackage[utf8]{inputenc}
\usepackage[T1]{fontenc}
\usepackage{amsmath,amssymb,amsfonts,mathtools,amsthm}
\usepackage{graphicx}
\usepackage{xcolor}
\usepackage{bbm}
\usepackage[colorlinks=true,linkcolor=blue,citecolor=blue,urlcolor=blue]{hyperref}
\usepackage{tikz}
\usetikzlibrary{arrows.meta,calc,positioning,shapes.geometric}
\newtheorem{thm}{Theorem}[section]
\newtheorem{prop}[thm]{Proposition}

\newtheorem{lemma}[thm]{Lemma}
\newtheorem{conj}[thm]{Conjecture}

\newtheorem{rem}[thm]{Remark}

\renewcommand{\H}{\mathcal{H}}

\newcommand{\E}{\mathbb{E}}
\newcommand{\Id}{\mathrm{Id}}
\newcommand{\Tr}{\mathrm{Tr}}

\newcommand{\ket}[1]{\lvert #1\rangle}
\newcommand{\bra}[1]{\langle #1\rvert}

\newcommand{\ketbra}[2]{\lvert #1\rangle\!\langle #2\rvert}

\begin{document}

\title{\textbf{Exponential lower bounds for low-degree strategies \\ in position-based quantum cryptography}}
\author{I.M. Moreno-Cuadrado}

\author{C. Palazuelos}
\affiliation{\small Departamento de An\'alisis Matem\'atico y Matem\'atica Aplicada, Universidad Complutense de Madrid, 28040 Madrid, Spain}
\affiliation{\small Instituto de Ciencias Matematicas, 28049 Madrid, Spain}

\author{D. Perez-Garcia}
\affiliation{\small Departamento de An\'alisis Matem\'atico y Matem\'atica Aplicada, Universidad Complutense de Madrid, 28040 Madrid, Spain}
\affiliation{\small Instituto de Ciencias Matematicas, 28049 Madrid, Spain}

\begin{abstract}

Non-local quantum computation (NLQC) consists on the implementation of a bipartite unitary $U_{AB}$ by two cooperating distant players by means of a two-round protocol with a single intermediate round of simultaneous communication. It is a major open question to know whether there exists a unitary $U_{AB}$ whose implementation in NLQC requires extra quantum systems with dimensions ${\rm exp}(n^{\Omega(1)})$, where $n$ is the dimension of systems $A$ and $B$. This type of exponential lower bound is essential for security of position-based quantum cryptography, since the capabilities of the adversaries are precisely described by NLQC. It has also been connected to many other problems, such as the optimality of universal quantum simulators, or the expected properties of holographic quantum gravity. Currently the best lower bounds are only sublinear in $n$.

In this paper we consider the family of diagonal ${\pm 1}$ valued unitaries $U_{\varepsilon}$, indexed by an element of the Boolean hypercube $\varepsilon \in \{\pm 1\}^{n^2}$, i.e. $U_{\varepsilon}|i\rangle_A|j\rangle_B =\varepsilon_{ij} |i\rangle_A|j\rangle_B$. We show that if $U_{\varepsilon}$ can be implemented in NLQC with constant accuracy for all $\varepsilon$, and  the dependency on $\varepsilon$ in the first round of the associated strategy is a polynomial of degree $O(n^{\frac{1}{4}-\delta})$, then the implementation of at least one $U_{\varepsilon}$ in NLQC requires resources scaling as $\exp\left(\Omega\left(\frac{n^{2\delta}}{(\log n)^2}\right)\right)$. The proof is based on geometric properties of particular Banach spaces, namely their type constants, together with random estimates of Boolean functions valued on them. 

\end{abstract}

\maketitle

\section{Introduction}
Quantum position verification (QPV) aims to provide secure authentication based on the geographical position of the communicating parties. This is expected to be relevant in settings where location itself is the identifying property, such as the navigation of autonomous cars or drones. First proof-of-principle experiments in these contexts have already been reported \cite{KavuriEtAl2025,KavuriEtAl2026}.

The most basic setting is one-dimensional QPV. Two cooperating verifiers send quantum systems $A$ and $B$ to a prover, timed so that they arrive simultaneously at the position to be verified. The honest prover, located at that position, is expected to implement a global unitary $U$ on the joint system $AB$ and return systems $A$ and $B$ to the corresponding verifiers. We will work in the idealized regime in which signals travel at the speed of light and the time required to implement $U$ is negligible compared with the time needed for the systems to travel between the verifiers and the prover.

From the viewpoint of two coordinated adversaries placed on opposite sides of the honest position, breaking one-dimensional QPV is equivalent to succeeding at a task of non-local quantum computation (NLQC). The adversaries must implement the global unitary $U$ by means of a two-round protocol with a single round of simultaneous communication; see Figure~\ref{fig:causal-structure}.

\begin{figure}[t]
\centering
\resizebox{\textwidth}{!}{%
\begin{tikzpicture}[
    >=Latex,
    verifier/.style={
        draw,
        rectangle,
        minimum width=.95cm,
        minimum height=.95cm,
        inner sep=2pt,
        font=\small
    },
    verifierdash/.style={
        draw=black,
        dashed,
        rectangle,
        minimum width=.95cm,
        minimum height=.95cm,
        inner sep=2pt,
        font=\small
    },
    prover/.style={
        draw,
        rectangle,
        minimum width=.95cm,
        minimum height=.95cm,
        inner sep=2pt,
        font=\small
    },
    op/.style={
        draw,
        diamond,
        aspect=1.0,
        minimum width=1.00cm,
        minimum height=1.00cm,
        inner sep=0pt,
        font=\small
    },
        opdash/.style={
        draw=black,
        dashed,
        diamond,
        aspect=1.0,
        minimum width=1.00cm,
        minimum height=1.00cm,
        inner sep=0pt,
        font=\small
    },
    ghost/.style={
        draw=gray!35,
        text=gray!45,
        dashed,
        rectangle,
        minimum width=.95cm,
        minimum height=.95cm,
        inner sep=2pt,
        font=\small
    },
    ghostop/.style={
        draw=gray!35,
        text=gray!45,
        dashed,
        diamond,
        aspect=1.0,
        minimum width=1.00cm,
        minimum height=1.00cm,
        inner sep=0pt,
        font=\small
    },
    qarrow/.style={->, thick},
    carrow/.style={->, thick, dash dot},
    comm/.style={->, thick, dashed},
    ghostline/.style={gray!35, thick},
    baseline/.style={thick},
    namelabel/.style={text=gray!65, font=\small},
    toplabel/.style={text=gray!65, font=\small}
]

\begin{scope}[xshift=0cm]

\coordinate (L0) at (0,0);
\coordinate (M0) at (4.1,0);
\coordinate (R0) at (8.2,0);

\coordinate (VLlow) at (0,1.0);
\coordinate (VRlow) at (8.2,1.0);
\coordinate (P)     at (4.1,4.1);
\coordinate (VLup)  at (0,7.4);
\coordinate (VRup)  at (8.2,7.4);

\draw[baseline] (-.35,0) -- (8.55,0);
\draw[baseline] (L0) -- ++(0,.15) -- ++(0,-.3);
\draw[baseline] (M0) -- ++(0,.15) -- ++(0,-.3);
\draw[baseline] (R0) -- ++(0,.15) -- ++(0,-.3);

\node[below] at (L0) {$x-\delta$};
\node[below] at (M0) {$x$};
\node[below] at (R0) {$x+\delta$};

\draw[->, thick] (-.95,1.0) -- (-.95,2.15);
\node[left] at (-.95,2.15) {$t$};

\node[verifier] (vll) at (VLlow) {Verifier};
\node[verifier] (vrl) at (VRlow) {Verifier};
\node[prover]   (pr)  at (P)     {Prover};

\node[verifierdash] (vlu) at (VLup) {Verifier};
\node[verifierdash] (vru) at (VRup) {Verifier};

\node[ghost] at (4.1,1.0) {};
\node[ghost] at (0,4.1) {};
\node[ghost] at (8.2,4.1) {};
\node[ghost] at (4.1,7.4) {};

\draw[qarrow] (vll.north east) -- (pr.south west);
\draw[qarrow] (vrl.north west) -- (pr.south east);

\draw[carrow] (pr.north west) -- (vlu.south east);
\draw[carrow] (pr.north east) -- (vru.south west);

\end{scope}

\begin{scope}[xshift=10.2cm]

\coordinate (L0) at (0,0);
\coordinate (M0) at (4.1,0);
\coordinate (R0) at (8.2,0);

\coordinate (VLlow) at (0,1.0);
\coordinate (VRlow) at (8.2,1.0);

\coordinate (A1) at (2.1,2.55);
\coordinate (B1) at (6.1,2.55);

\coordinate (A2) at (2.1,5.35);
\coordinate (B2) at (6.1,5.35);

\coordinate (VLup) at (0,7.4);
\coordinate (VRup) at (8.2,7.4);

\draw[baseline] (-.35,0) -- (8.55,0);
\draw[baseline] (L0) -- ++(0,.15) -- ++(0,-.3);
\draw[baseline] (M0) -- ++(0,.15) -- ++(0,-.3);
\draw[baseline] (R0) -- ++(0,.15) -- ++(0,-.3);

\node[below] at (L0) {$x-\delta$};
\node[below] at (M0) {$x$};
\node[below] at (R0) {$x+\delta$};

\node[verifier] (vll) at (VLlow) {Verifier};
\node[verifier] (vrl) at (VRlow) {Verifier};

\node[op] (Aop) at (A1) {$V_\varepsilon$};
\node[op] (Bop) at (B1) {$W_\varepsilon$};

\node[opdash] (Atop) at (A2) {$\widetilde V_\varepsilon$};
\node[opdash] (Btop) at (B2) {$\widetilde W_\varepsilon$};

\node[verifierdash] (vlu) at (VLup) {Verifier};
\node[verifierdash] (vru) at (VRup) {Verifier};

\node[left=3pt of Aop] {Alice};
\node[right=3pt of Bop] {Bob};
\node[toplabel, left=3pt of Atop] {Alice};
\node[toplabel, right=3pt of Btop] {Bob};

\draw[thick] (Aop.east) -- (Bop.west);
\node[above] at ($(Aop.east)!0.5!(Bop.west)$) {$|\varphi\rangle$};

\fill (Aop.east) circle (1.6pt);
\fill (Bop.west) circle (1.6pt);

\draw[qarrow] (vll.north east) -- (Aop.south west);
\draw[qarrow] (vrl.north west) -- (Bop.south east);

\draw[comm] (Aop.north) -- (Atop.south);
\draw[comm] (Bop.north) -- (Btop.south);

\draw[comm] (Aop.north east) -- (Btop.south west);
\draw[comm] (Bop.north west) -- (Atop.south east);

\draw[carrow] (Atop.north west) -- (vlu.south east);
\draw[carrow] (Btop.north east) -- (vru.south west);

\node[ghostop] (Ag) at (2.1,1.25) {};
\node[ghostop] (Bg) at (6.1,1.25) {};
\draw[ghostline] (Ag.east) -- (Bg.west);
\node[gray!45] at ($(Ag.east)!0.5!(Bg.west)+(0,.18)$) {$|\varphi\rangle$};
\fill[gray!45] (Ag.east) circle (1.5pt);
\fill[gray!45] (Bg.west) circle (1.5pt);

\node[ghost] at (0,4.1) {};
\node[ghost] at (8.2,4.1) {};
\node[ghostop] at (2.1,7.4) {};
\node[ghostop] at (6.1,7.4) {};

\end{scope}

\end{tikzpicture}%
}
\caption{Causal structure of one-dimensional PV protocols. Honest implementation (left) vs.\ adversarial scenario (right).}
\label{fig:causal-structure}
\end{figure}

The classical analogue is information-theoretically insecure. Since classical information can be copied and forwarded, the adversaries can simply copy the received challenges, exchange the relevant information in the first round, and emulate the honest prover in the second. Thus classical position verification is impossible without additional assumptions \cite{ChandranEtAl2009,BuhrmanEtAl2014}. Computational assumptions can change this conclusion: in particular, protocols with classical verifiers have been constructed under assumptions such as the quantum hardness of Learning with Errors \cite{LiuLiuQian2022}.

NLQC has since become a common language connecting QPV with several other topics:  port-based teleportation, \cite{Vaidman2003,BeigiKonig2011}, complexity measures like $T$-count or the garden-hose model \cite{Speelman2016,BuhrmanFehrSchaffnerSpeelman2013}, holography and the AdS/CFT correspondence \cite{May2019,MayPeningtonSorce2020,May2022}, information-theoretic cryptography \cite{AllerstorferEtAl2024,AsadiEtAl2025CDS}, communication complexity \cite{GirishMayParhamYuen2026},  secret sharing \cite{AnanthEtAl2024}, Hamiltonian simulation \cite{ApelEtAl2024}, or Banach-space geometry \cite{JungeKubickiPalazuelosPerezGarcia2022}. A recent review gives a unified account of these connections \cite{May2026Review}.

Despite this broad range of connections, the main theoretical question remains open: how must the resources of the adversaries scale as a function of the resources of the honest prover? There are several possible resource measures. Some previous lower bounds compare only quantum resources, treating classical communication or computation as free \cite{BluhmChristandlSpeelman2022}, in a spirit related to the quantum bounded storage model \cite{DamgardFehrSalvailSchaffner2008}. Here we deliberately use a more skeptical measure. We compare the total dimension of all systems used by the adversaries, including entanglement registers and communicated systems, against $n$, the dimension of each honest input system in the NLQC task. In particular, classical and quantum registers are both counted through their dimension.

With respect to this total-resource measure, known generic attacks require resources exponential in $n$, whereas the best unconditional lower bounds remain much smaller; in particular, they are sublinear\footnote{The bound is linear in the number of qubits but with a prefactor strictly smaller than one.}  in $n$ \cite{BluhmChristandlSpeelman2022,TomamichelFehrKaniewskiWehner2013}. Closing this gap is the central motivation for this work.

If one could exhibit QPV protocols for which every attack requires exponential total resources, then QPV would be secure in a strong resource-bounded information-theoretic sense. Such lower bounds would also have consequences beyond position verification. For instance, by Theorem~14 of \cite{ApelEtAl2024}, they would rule out the existence of very good universal simulators, in the sense of Definition~9 of the same work.

Moreover, in view of the recent connections between NLQC and classical cryptographic primitives such as conditional disclosure of secrets and private simultaneous messages \cite{AllerstorferEtAl2024,AsadiEtAl2025CDS}, they could open a route to stronger lower bounds on the resources required for those primitives.

In this paper, building on the vision and techniques started in \cite{JungeKubickiPalazuelosPerezGarcia2022}, we will use Banach space geometry tools and ideas to provide exponential lower bounds in the resources of the adversaries under a low-degree assumption on their attacks.

More concretely, we consider the following family of games $G_\varepsilon$, indexed by a bit string $\varepsilon\in \{\pm 1\}^{n^2}$, which is known to the participants, Alice and Bob, which are the coordinated adversaries aiming to break the QPV protocol. 
The referee distributes the following initial state 
\begin{equation}\label{eq:psi}
    \ket{\psi}=\frac1n\sum_{i,j=1}^n \ket{ij}_R\ket{i}_A\ket{j}_B,
\end{equation}
keeps $R$, gives $A$ to Alice and $B$ to Bob, and asks Alice and Bob to produce
\begin{equation}\label{eq:psi-eps}
    \ket{\psi_\varepsilon}=\frac1n\sum_{i,j=1}^n \varepsilon_{ij}\ket{ij}_R\ket{i}_A\ket{j}_B.
\end{equation}
Equivalently, the honest operation is the diagonal sign unitary $U_\varepsilon\ket{i}\ket{j}=\varepsilon_{ij}\ket{i}\ket{j}$.  The referee accepts using the two-outcome measurement
\begin{equation}
   \bigl\{\ketbra{\psi_\varepsilon}{\psi_\varepsilon},\,\mathbbm{1}-\ketbra{\psi_\varepsilon}{\psi_\varepsilon}\bigr\}.
\end{equation}
The allowed strategies (after purification) by Alice and Bob are as follows (dependence on $\varepsilon$ is explicitly specified):

They share an initial entangled state $\ket{\phi_\varepsilon}$ living in systems $a,e;b,f$, where Alice holds systems $ae$ and Bob holds systems $bf$. They do a first unitary operation $V_\varepsilon$ acting on $Aae$ and $W_\varepsilon$ acting on $Bbf$. Then Alice and Bob interchange the systems $e,f$ (we call $T$ the map from $AaeBbf$ to $AafBbe$ which acts as identity on $A,a,B,b$ and swaps systems $e,f$) and they apply a second unitary operation $\tilde{V}_\varepsilon$ acting on $Aaf$ and $\tilde{W}_\varepsilon$ acting on $Bbe$. A strategy $\mathcal{S}$ is then completely specified by unitaries $V_\varepsilon, W_\varepsilon, \tilde{V}_\varepsilon, \tilde{W}_\varepsilon$ and state $\ket{\phi_\varepsilon}$. Here, $\dim(A)=\dim(B)=n$ and we can assume $\dim(a)=\dim(b)=\dim(e)=\dim(f)=k$.

The main result of the paper is the following (see Theorem \ref{thm:low-degree-direct} for the formal statement):

\vspace{0.38cm}

\noindent\fbox{%
\parbox{\textwidth}{
{\bf Main result:} For every $0<\delta<1/4$, if the first-round strategy operator given by $(V_\varepsilon \otimes W_\varepsilon)|\phi_\varepsilon\rangle$ is a polynomial in $\varepsilon$ of degree at most $n^{1/4-\delta}$, then there exists an $\varepsilon$ for which the adversaries Alice and Bob require auxiliary dimension  $k=\exp\left(
\Omega\left(
\frac{n^{2\delta}}{(\log n)^2}
\right)
\right)$ to win the game $G_\varepsilon$ with constant probability of success.
}}
\vspace{0.1cm}

Note that, since the function describing the first-round strategy is defined on $\{-1,1\}^{n^2}$, it is always a polynomial of degree at most $n^2$.

Note also that the probability of success, called the value of the game $\omega(G_\varepsilon)$, is precisely the entanglement fidelity between the target unitary $U_\varepsilon$ and the actual quantum channel implemented by Alice and Bob in their strategy, i.e. the fidelity between the corresponding Choi states. It is therefore a natural way of measuring the accuracy of the NLQC implementation of $U_\varepsilon$ done by Alice and Bob, and it will be the one used in this paper. If one prefers to measure the accuracy with other distance, like the diamond norm, one can simply transfer the results using the known bounds between them.

The low-degree result can be applied to several natural families of attacks in which the dependence on the public sign string is generated in a controlled way, in particular, by applying the target unitary $U_\varepsilon$ itself. This is the case for port-based teleportation attacks, where all the first-round dependence on $\varepsilon$ is contained in a factor
of the form $U_\varepsilon^{\otimes N}$. Our main result shows that all attacks with this property  have exponential lower bounds to the auxiliary dimension (Remark \ref{rem:port-based}). A similar situation holds  if one wants to study the query complexity of protocols for NLQC. In that direction, a no-free-lunch theorem is stated in Remark \ref{rem:query-compl}.

Finally, we will state a conjecture on the scaling of the type 2 constant of a particular Banach space, whose affirmative solution would imply the existence of a game $G_\varepsilon$ which, without any extra hypothesis, requires exponential auxiliary dimension to be won with constant probability of success (see Theorem \ref{thm:conjectural}).

\subsection{Relation to prior work}

Let us clarify how our formulation compares with previous constructions in which large bit strings also appear in the description of the game. The closest predecessor is \cite{JungeKubickiPalazuelosPerezGarcia2022}, which introduced the Banach-space route to position-based cryptography. In the game $G_{\rm Rad}$ studied there, the sign string is part of the protocol input: the verifiers sample $\{-1,1\}^{n^2}$ during the execution of the protocol, and this classical information is sent to one side of the prover. The lower bounds are then expressed in terms of regularity parameters measuring how the cheating strategy varies with this input.

The $f$-routing line of work is also closely related \cite{BuhrmanFehrSchaffnerSpeelman2013,BluhmChristandlSpeelman2022}. In the usual $f$-routing setting, both parties receive classical strings and these strings determine the routing, measurement, or processing of a small quantum system. By contrast, in $G_{\rm Rad}$ only one side receives the sign string, but in both constructions the relevant bit string is still part of the input supplied during the execution of the task. Consequently, the known lower bounds become genuinely non-trivial only in a regime where classical information is treated as a free resource.

The present formulation removes this feature. Here $\varepsilon$ is fixed and public: it labels the game $G_\varepsilon$ rather than being supplied as a long classical challenge during the execution of the protocol. Thus our aim is not to lower bound the cost of receiving, processing, or communicating this classical string. Rather, we study the family $G_\varepsilon$ and ask whether at least one fixed diagonal unitary in this family is intrinsically costly to implement as a non-local quantum computation when all resources are counted. The conjectural lower bound is obtained by averaging the optimized game values and proving that at least one member is hard for all strategies at the given budget. For the low-degree result, we instead bound the average success of each admissible family of first-round operators, optimized over all possible second rounds.

As an important comment, \cite{BluhmChristandlSpeelman2022} requires a generic Boolean function for the argument to work. In subsequent recent papers \cite{AsadiCulfMay2025,AsadiCleveCulfMay2025}, the authors have managed to obtain similar bounds for particular easy functions, such as the inner product.

In this sense, to the best of our knowledge, this is the first paper in which exponential lower bounds (though under extra hypothesis) are provided if all resources (classical and quantum) are taken into consideration. 

\subsection{Structure of the paper}

In Section \ref{sec:basics} we give the necessary background on Banach space theory needed in the paper, in particular tensor norms, polynomial inequalities, complex interpolation and type constants.  In Section \ref{sec:game} we explain the game $G_\varepsilon$, and detail the most general strategies the participants can use. In Section \ref{sec:conjectural} we introduce the Banach spaces which capture the set of allowed strategies. We state and prove there Theorem \ref{thm:conjectural}, an exponential lower bound on the resources of the players based on a conjecture about the type 2 constant of a particular Banach space. Section \ref{sec:low-degree} contains the main result of the paper, Theorem \ref{thm:low-degree-direct}, and its proof, together with some applications in Remarks \ref{rem:port-based} and \ref{rem:query-compl}.

\section{Facts and notation from Banach-space theory}\label{sec:basics}
We will require the use of Banach space theory tools. For completeness, we will sketch here the notions, results and notations which will be required later. We refer to \cite{JungeKubickiPalazuelosPerezGarcia2022} and references therein for more detailed information. All spaces will be complex and finite dimensional. Constants denoted by $C$ are independent of the dimensions; their permitted dependence on other parameters will be stated.

\subsection*{Basic definitions}
For a quantum system $A$, $\H_A$ denotes its Hilbert space. We fix orthonormal bases in all Hilbert spaces. We write $S_p^{A,B}$ for the linear maps from $\H_B$ to $\H_A$, endowed with the Schatten norm of order $p$; thus the row (output) index comes first and the column (input) index comes second. For $p=1,2,\infty$,
\[
\|T\|_{S_1}=\Tr|T|,\qquad
\|T\|_{S_2}=\bigl(\Tr(T^\dagger T)\bigr)^{1/2},\qquad
\|T\|_{S_\infty}=\sup_{\|\xi\|=1}\|T\xi\|,
\]
where $|T|=(T^\dagger T)^{1/2}$. Here $T^\dagger$ is the Hilbert-space adjoint, whereas $T^T$ denotes the transpose in the fixed bases. A superscript $*$ on a Banach space denotes its space of continuous \emph{linear} functionals, that is, its dual space. We use the bilinear trace pairing
\begin{equation}\label{eq:trace-duality}
\langle x,y\rangle_{\rm tr}=\Tr(y^Tx),
\qquad x\in S_1^{A,B},\quad y\in S_\infty^{A,B}.
\end{equation}
It gives $(S_1^{A,B})^*=S_\infty^{A,B}$ isometrically. In particular,
$\|x\|_1=\sup_{\|y\|_\infty\leq1}|\Tr(y^Tx)|$.
Tensor duality is the bilinear extension of the pairings on the individual factors.

For normed spaces $X,Y$, the operator norm of $T:X\to Y$ is
\begin{equation}
\|T:X\to Y\|=\max_{x\in B_X}\|T(x)\|_Y,
\end{equation}
where $B_X$ is the closed unit ball. If $E\subset X$ is the range of a projection of norm at most one, we call $E$ a $1$-complemented subspace. A tensor norm induces the same norm on tensor products of such subspaces: apply its metric mapping property both to the inclusions and to the projections. Inequalities between two norms on the same vector space are reversed on the dual space.

\subsection*{Tensor norms}

A tensor norm $\alpha$ assigns to every pair of normed spaces $X,Y$
a norm $\|\cdot\|_\alpha$ on the algebraic tensor product $X\otimes Y$
such that

\begin{enumerate}
\item[(i)] $\|x\otimes y\|_{X\otimes_\alpha Y}
=
\|x\|_X\|y\|_Y$ for every $x\in X$ and $y\in Y$.

\item[(ii)] \emph{(metric mapping property)} for every pair of bounded
linear operators $T_i:X_i\to Y_i$, $i=1,2$,
\[
\|T_1\otimes T_2:
X_1\otimes_\alpha X_2
\longrightarrow
Y_1\otimes_\alpha Y_2\|
\leq
\|T_1\|\,\|T_2\|.
\]
\end{enumerate}

In particular, $\|f\otimes g\|_{(X\otimes_\alpha Y)^*}
=
\|f\|_{X^*}\|g\|_{Y^*}$ for every $f\in X^*$ and $g\in Y^*$.
The smallest and largest such tensor norms are, respectively, the injective
and projective tensor norms. For $u\in X\otimes Y$, we define
\begin{align}
\|u\|_\varepsilon
&=
\sup_{f\in B_{X^*},\,g\in B_{Y^*}}
|(f\otimes g)(u)|,
\label{eq:injective-definition}\\
\|u\|_\pi
&=
\inf\left\{
\sum_{s=1}^r\|x_s\|_X\|y_s\|_Y:
u=\sum_{s=1}^r x_s\otimes y_s,\ r\in\mathbb N
\right\}.
\label{eq:projective-definition}
\end{align}
Consequently, $\varepsilon\leq\alpha\leq\pi$.
In finite dimensions, $(X\otimes_\varepsilon Y)^*$ is canonically
and isometrically identified with $X^*\otimes_\pi Y^*$, and
$(X\otimes_\pi Y)^*$ with $X^*\otimes_\varepsilon Y^*$, under the
canonical tensor pairing
$\langle f\otimes g,x\otimes y\rangle=f(x)g(y)$.
We refer to \cite{DefantFloret1993} and
\cite[Section~2.2]{JungeKubickiPalazuelosPerezGarcia2022} for background.
\subsection*{Polynomials}
Our degree estimate will reduce to a one-variable derivative bound. A real polynomial with values in a Banach space $\mathcal B$ is a finite sum $q(t)=\sum_{r=0}^N b_rt^r$, where $b_r\in\mathcal B$ and the variable $t$ is real. We recall the standard, complex valued, Bernstein's inequality and deduce its vector-valued extension.

\begin{thm}[Bernstein's inequality]\label{thm:bernstein-complex}
Let $p$ be a complex valued real polynomial of degree at most $N$. Then, for every  $x\in(-1,1)$,
\begin{equation}\label{eq:bernstein-complex}
|p'(x)|
\leq
\frac{N}{\sqrt{1-x^2}}
\sup_{t\in[-1,1]}|p(t)|.
\end{equation}
See \cite[Chapter~5, Theorem~5.1.7]{BorweinErdelyi1995}.
\end{thm}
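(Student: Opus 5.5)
This is the classical Bernstein inequality, which the paper quotes from \cite{BorweinErdelyi1995}. A self-contained argument goes through the trigonometric Bernstein inequality, so I would proceed in three steps.

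First, reduce to real-valued polynomials. Write $p=u+iv$ with $u,v$ real polynomials of degree at most $N$. Fix $x$ and choose a unimodular $\lambda$ with $\lambda p'(x)=|p'(x)|$. Then $q=\mathrm{Re}(\lambda p)$ is a real polynomial of degree at most $N$. It satisfies $q'(x)=|p'(x)|$ and $\sup_{[-1,1]}|q|\le\sup_{[-1,1]}|p|$. So it suffices to prove the inequality for real $p$.

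Second, substitute $x=\cos\theta$ and set $T(\theta)=p(\cos\theta)$. This is a real trigonometric polynomial of degree at most $N$ with $\|T\|_\infty=\sup_{[-1,1]}|p|$. Since $T'(\theta)=-\sin\theta\,p'(\cos\theta)$ and $\sin\theta=\sqrt{1-x^2}$ for $\theta\in(0,\pi)$, the claim follows from the trigonometric Bernstein inequality $\|T'\|_\infty\le N\|T\|_\infty$.

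Third, prove the trigonometric inequality. This is the main step. I would use the standard Riesz interpolation formula
\[
T'(\theta)=\frac{1}{4N}\sum_{k=1}^{2N}(-1)^{k+1}\frac{1}{\sin^2(\theta_k/2)}\,T(\theta+\theta_k),\qquad \theta_k=\frac{(2k-1)\pi}{2N}.
\]
It is valid for all trigonometric polynomials of degree at most $N$. To check it, verify it on the basis functions $e^{ij\theta}$ with $|j|\le N$, for instance via the Fourier expansion of a suitable kernel. Applying the formula to $T=\cos(N\theta)$ at $\theta=0$ gives the normalization $\frac{1}{4N}\sum_k \sin^{-2}(\theta_k/2)=N$. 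Taking absolute values then yields $|T'(\theta)|\le N\|T\|_\infty$.

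An alternative that avoids the explicit formula is a comparison argument. Suppose $|T'(\theta_0)|>N\|T\|_\infty$, and compare $T$ with a suitably shifted and scaled $\lambda\cos(N(\cdot-\alpha))$, where $\lambda$ is slightly larger than $\|T\|_\infty$. Their difference would then have more than $2N$ zeros on the circle, which is a contradiction for a nonzero trigonometric polynomial of degree $N$. The delicate point in this route is the zero counting.
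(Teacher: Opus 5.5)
The paper does not prove this theorem; it only cites Borwein--Erd\'elyi. Your route is the standard textbook proof: rotate to a real polynomial, substitute $x=\cos\theta$, then apply the trigonometric Bernstein inequality via M.~Riesz's interpolation formula or via zero counting against $\lambda\cos(N(\cdot-\alpha))$. Steps 1 and 2 are correct as written. The rotation is only needed for the zero-counting variant, since Riesz's formula is linear and applies directly to complex-valued $T$.

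There is one concrete slip in Step 3. Testing the formula on $T=\cos(N\theta)$ at $\theta=0$ gives only $0=0$, because $T'(0)=0$ and $\cos(N\theta_k)=\cos((2k-1)\pi/2)=0$. Use $T=\sin(N\theta)$ at $\theta=0$ instead. Then $T'(0)=N$ and $T(\theta_k)=(-1)^{k+1}$, which gives $\frac{1}{4N}\sum_{k=1}^{2N}\sin^{-2}(\theta_k/2)=N$.

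Be aware that this normalization is not a by-product: it is the entire content of the formula. So your ``suitable kernel'' verification is where all the work sits, and it is only gestured at. Here is one way to carry it out.
\begin{itemize}
\item[(i)] Put $f(j)=\sum_{k=1}^{2N}(-1)^{k+1}e^{ij\theta_k}\sin^{-2}(\theta_k/2)$. The formula tested on $e^{ij\theta}$ says exactly that $f(j)=4Nij$ for $|j|\le N$.
\item[(ii)] Use the identity $e^{i(j+1)\phi}-2e^{ij\phi}+e^{i(j-1)\phi}=-4\sin^2(\phi/2)e^{ij\phi}$. Also $\sum_k(-1)^{k+1}e^{ij\theta_k}=0$ for $|j|<N$, since it is a full geometric sum of a nontrivial $2N$-th root of unity. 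Together these show that $f$ is affine on $\{-N,\dots,N\}$.
\item[(iii)] $f(0)=0$ by the pairing $k\leftrightarrow 2N+1-k$, using $\theta_{2N+1-k}=2\pi-\theta_k$.
\item[(iv)] $f(N)=i\sum_k\sin^{-2}(\theta_k/2)$, because $e^{iN\theta_k}=i(-1)^{k+1}$.
\end{itemize}
Hence everything reduces to the identity $\sum_{k=1}^{2N}\sin^{-2}(\theta_k/2)=4N^2$. To prove it, start from $\sum_{k=0}^{M-1}\cot(x+k\pi/M)=M\cot(Mx)$, the logarithmic derivative of $\sin(Mx)=2^{M-1}\prod_{k=0}^{M-1}\sin(x+k\pi/M)$. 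Differentiate once more and evaluate at $M=2N$, $x=\pi/(4N)$. With this filled in, your argument is complete.
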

\begin{prop}[Bernstein's inequality, Banach-valued case]\label{prop:bernstein-banach}
For a polynomial $q$ of degree at most $N$ with values in a Banach space $\mathcal B$,
\begin{equation}\label{eq:bernstein-banach2}
\|q'(x)\|_{\mathcal B}\leq\frac{N}{\sqrt{1-x^2}}
\sup_{t\in[-1,1]}\|q(t)\|_{\mathcal B},\qquad -1<x<1.
\end{equation}
In particular, $\|q'(0)\|_{\mathcal B}\leq N\sup_{[-1,1]}\|q\|_{\mathcal B}$.
\end{prop}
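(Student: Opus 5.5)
The plan is to reduce to the scalar Bernstein inequality of Theorem~\ref{thm:bernstein-complex} via Hahn--Banach duality, applied to the dual space $\mathcal B^*$ of continuous linear functionals.

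Fix $x\in(-1,1)$ and write $q(t)=\sum_{r=0}^N b_r t^r$. First, $q'(x)=\sum_{r\ge1} r b_r x^{r-1}$ is a well-defined element of $\mathcal B$, since the sum is finite. If $q'(x)=0$ there is nothing to prove. Otherwise, by the Hahn--Banach theorem we choose $\phi\in\mathcal B^*$ with $\|\phi\|_{\mathcal B^*}=1$ and $\phi(q'(x))=\|q'(x)\|_{\mathcal B}$. Recall that all spaces are complex and finite dimensional, so complex Hahn--Banach applies directly.

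Next, define the scalar function $p(t)=\phi(q(t))=\sum_{r=0}^N \phi(b_r)t^r$. This is a complex valued real polynomial of degree at most $N$. By linearity of $\phi$ applied to the finite sum, $p'(x)=\phi(q'(x))$. Theorem~\ref{thm:bernstein-complex} then gives
\[
\|q'(x)\|_{\mathcal B}=|p'(x)|\le \frac{N}{\sqrt{1-x^2}}\sup_{t\in[-1,1]}|\phi(q(t))|\le \frac{N}{\sqrt{1-x^2}}\sup_{t\in[-1,1]}\|q(t)\|_{\mathcal B},
\]
where the last step uses $\|\phi\|_{\mathcal B^*}=1$. Setting $x=0$ gives the particular case $\|q'(0)\|_{\mathcal B}\le N\sup_{[-1,1]}\|q\|_{\mathcal B}$.

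There is no real obstacle here. The only points to check are that differentiation commutes with the linear functional, which holds because $q$ is a finite sum, and that the norming functional exists, which is Hahn--Banach (or compactness of the dual unit ball in finite dimension). The supremum over $[-1,1]$ is finite because $q$ is continuous on a compact set.
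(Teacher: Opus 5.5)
Your proposal is correct and follows essentially the same route as the paper: compose $q$ with a norm-one functional, apply the scalar Bernstein inequality (Theorem~\ref{thm:bernstein-complex}), and use Hahn--Banach. The only cosmetic difference is that you pick a norming functional for $q'(x)$ at the outset, whereas the paper bounds $|\varphi(q'(x))|$ for every $\varphi\in B_{\mathcal B^*}$ and takes the supremum at the end.
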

\begin{proof}

For $\varphi\in B_{\mathcal B^*}$, the scalar polynomial $\varphi\circ q$ has degree at most $N$. The inequality (\ref{eq:bernstein-complex}) gives
\[
|\varphi(q'(x))|\leq\frac{N}{\sqrt{1-x^2}}
\sup_{t\in[-1,1]}|\varphi(q(t))|
\leq\frac{N}{\sqrt{1-x^2}}
\sup_{t\in[-1,1]}\|q(t)\|_{\mathcal B}.
\]
Taking the supremum over $\varphi\in B_{\mathcal B^*}$ and using Hahn--Banach proves \eqref{eq:bernstein-banach2}.
\end{proof}

\subsection*{Interpolation}

Properties of interpolation spaces will allow us to obtain estimates that are useful for our purposes. Here we restrict ourselves to complex interpolation of finite-dimensional Banach spaces. We will not recall the full definition of the complex interpolation method, and instead focus on the properties that we need; see \cite{BerghLofstrom1976} for a complete treatment.

In our finite-dimensional setting, the interpolation space $(X_0,X_1)_\theta$, $0<\theta<1$, can always be constructed. More generally, for arbitrary Banach spaces, one says that the couple $(X_0,X_1)$ is compatible\footnote{A couple $(X_0,X_1)$ is compatible if both spaces are continuously embedded in a common Hausdorff topological vector space.} when the complex interpolation construction is well defined. For the sake of concreteness, throughout this work we consider the case in which $X_0$, $X_1$ and $(X_0,X_1)_\theta$ are algebraically the same finite-dimensional vector space endowed with different norms.

The complex interpolation method, which assigns to every compatible couple $(X_0,X_1)$ the space $[X_0,X_1]_\theta$, is an exact interpolation functor of exponent $\theta$. This means that it satisfies the following property.

\begin{thm}[{\cite[Theorem~4.1.2]{BerghLofstrom1976}}]\label{Interp_thm} Let $(X_0,X_1)$ and $(Y_0,Y_1)$ be compatible couples, and let $T:X_0+X_1\to Y_0+Y_1$ be linear and bounded from $X_i$ to $Y_i$, $i=0,1$. Then
\[
\|T:[X_0,X_1]_\theta\to[Y_0,Y_1]_\theta\|
\le
\|T:X_0\to Y_0\|^{1-\theta}
\|T:X_1\to Y_1\|^\theta.
\]
Here $\|\cdot\|$ denotes the usual operator norm.
\end{thm}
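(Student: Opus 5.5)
The plan is to argue directly from the definition of the complex interpolation method. The proof is a normalization argument in the spirit of the Hadamard three lines lemma: we transport an almost-optimal analytic representative of $x$ through $T$, after multiplying it by a suitable scalar analytic factor. Recall the definition. Let $S=\{z\in\mathbb C:0\le \mathrm{Re}\,z\le 1\}$. Let $\mathcal F(X_0,X_1)$ be the space of bounded continuous functions $F:S\to X_0+X_1$ that are analytic in the interior of $S$, and such that $t\mapsto F(j+it)$ is continuous and bounded into $X_j$ for $j=0,1$. This space carries the norm
\[
\|F\|_{\mathcal F}=\max_{j=0,1}\sup_{t\in\mathbb R}\|F(j+it)\|_{X_j}.
\]
Then $\|x\|_{[X_0,X_1]_\theta}=\inf\{\|F\|_{\mathcal F}: F(\theta)=x\}$, and analogously for the couple $(Y_0,Y_1)$.

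Write $M_j=\|T:X_j\to Y_j\|$. To avoid dividing by zero, we first replace $M_j$ by $M_j+\eta$ for some $\eta>0$ and let $\eta\to0$ at the very end; so we assume $M_0,M_1>0$. Fix $x$ with $\|x\|_{[X_0,X_1]_\theta}<1$. Choose $F\in\mathcal F(X_0,X_1)$ with $F(\theta)=x$ and $\|F\|_{\mathcal F}<1+\eta$, and define
\[
G(z)=M_0^{\,z-1}M_1^{-z}\,T\bigl(F(z)\bigr).
\]
We then check that $G\in\mathcal F(Y_0,Y_1)$. Since $T:X_0+X_1\to Y_0+Y_1$ is bounded and linear, $T\circ F$ is continuous on $S$ and analytic inside. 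The scalar factor $M_0^{z-1}M_1^{-z}=e^{(z-1)\log M_0-z\log M_1}$ is entire and bounded on the strip, so $G$ is again continuous, bounded and analytic. On the boundary lines we have $|M_0^{it-1}M_1^{-it}|=M_0^{-1}$ and $|M_0^{it}M_1^{-1-it}|=M_1^{-1}$. Hence
\[
\|G(it)\|_{Y_0}\le M_0^{-1}M_0\|F(it)\|_{X_0},
\qquad
\|G(1+it)\|_{Y_1}\le M_1^{-1}M_1\|F(1+it)\|_{X_1},
\]
and the boundary continuity into $Y_j$ follows from that of $F$ into $X_j$. Thus $\|G\|_{\mathcal F(Y_0,Y_1)}\le\|F\|_{\mathcal F}<1+\eta$.

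Since $G(\theta)=M_0^{\theta-1}M_1^{-\theta}\,Tx$, the definition of the interpolation norm gives
\[
\|Tx\|_{[Y_0,Y_1]_\theta}\le(1+\eta)\,M_0^{1-\theta}M_1^{\theta}.
\]
By homogeneity this holds for every $x$ in the unit ball, which bounds the operator norm. Letting $\eta\to0$ (in both places where $\eta$ appeared) yields the stated inequality.

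The main point requiring care is verifying that $G$ genuinely belongs to the admissible class $\mathcal F(Y_0,Y_1)$ and handling the degenerate case $M_j=0$. In our finite-dimensional setting, where all spaces are the same vector space with equivalent norms, these regularity issues are immediate: continuity and analyticity are unaffected by the choice of norm, and the perturbation $M_j+\eta$ removes the degenerate case.
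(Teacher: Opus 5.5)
Your argument is correct and is the standard proof behind the cited Bergh--L\"ofstr\"om result; the paper only cites the theorem and gives no proof of its own. You push a near-optimal representative $F$ of $x$ through $T$ and renormalize by the entire factor $M_0^{z-1}M_1^{-z}$, and the $\eta$-perturbation of the $M_j$ correctly handles the degenerate case $M_j=0$. If one uses Bergh--L\"ofstr\"om's version of $\mathcal F$, which also requires $\|F(j+it)\|_{X_j}\to0$ as $|t|\to\infty$, your $G$ inherits this decay because the scalar factor has constant modulus on each boundary line, so nothing changes.
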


We will only need the case $\theta=\frac12$. By applying this functorial property to tensor products, the interpolation of two tensor norms gives another tensor norm: the metric mapping property follows by interpolation, while the crossnorm property is preserved since the endpoint tensor norms agree on elementary tensors. In particular, we will consider extensively the $\frac12$-interpolation tensor norm between the two extremal tensor norms $\varepsilon$ and $\pi$. We denote this norm by $(\varepsilon,\pi)_{\frac12}$. We will use the following Hilbertian identification:
$
\mathcal H_A\otimes_{(\varepsilon,\pi)_{\frac12}}\mathcal H_B
=
\mathcal H_{AB}$ isometrically. In particular,
$
S_2^{A,B}\otimes_{(\varepsilon,\pi)_{\frac12}}S_2^{C,D}
=
S_2^{AC,BD}
$
isometrically.

\subsection*{Type constants}
Given a normed space $X$, we define its type $2$ constant $T_2(X)$ as the smallest constant $C$ such that, for every finite sequence of vectors   $x_1,\ldots,x_N\in X$,
\[
\left(\E_\epsilon\left\|\sum_{i=1}^N\epsilon_i x_i\right\|_X^2\right)^{1/2}
\le C\left(\sum_{i=1}^N\|x_i\|_X^2\right)^{1/2}.
\]
Here the expectation is taken with respect to the uniform measure on $\{-1,1\}^N$.

Replacing the $L_2$ mean by the $L_1$ mean changes this constant by universal factors, by the Kahane--Khintchine inequality; see \cite[Comment 2.12]{JungeKubickiPalazuelosPerezGarcia2022}. Restricting to sequences of length at most $N$ defines $T_2^{(N)}(X)$.

We will need the following non-trivial estimate, obtained from
\cite[Proposition~2.18 and the estimate following equation~(13)]
{JungeKubickiPalazuelosPerezGarcia2022}: if $d_A\leq d_B$, then
\begin{equation}\label{eq3}
T_2^{(d_A^2)}
\left(
S_1^{A,B}\otimes_{(\varepsilon,\pi)_{1/2}}S_1^{A,B}
\right)
\leq
C d_A^{3/4}\log(2d_A)\sqrt{\log(2d_Ad_B)}.
\end{equation}

\section{The game}\label{sec:game}
Fix $\varepsilon\in\{\pm1\}^{n^2}$, which is known to the participants. The referee prepares the states
\[
\ket{\psi}=\frac1n\sum_{i,j=1}^n\ket{ij}_R\ket{i}_A\ket{j}_B, \,\,\text{ and }\,\, 
\ket{\psi_\varepsilon}=\frac1n\sum_{i,j=1}^n\varepsilon_{ij}\ket{ij}_R\ket{i}_A\ket{j}_B,
\]
keeps $R$, and sends $A$ to Alice and $B$ to Bob. Here $\dim A=\dim B=n$ and $\dim R=n^2$. The final measurement is
$\{\Pi_\varepsilon,I_{RAB}-\Pi_\varepsilon\}$, where
$\Pi_\varepsilon=\ketbra{\psi_\varepsilon}{\psi_\varepsilon}$.
The value of the game will be the optimal probability of success among all allowed strategies which, according to the NLQC restriction (see Figure \ref{fig:causal-structure}), will consist of two rounds of local quantum operations (quantum channels), using a pre-shared quantum state, with an intermediate round of simultaneous communication. 

\subsection*{The strategies}
Initially Alice and Bob share a state, that we can assume to be pure $\rho_{A'B'}=|\tau\rangle\langle\tau|$, $\tau\in \H_{A'B'}$. Their first-round channels and their second-round channels have the following form. We will always use a tilde to denote the second round. 
\begin{equation}\label{eq:channel-spaces}
\begin{aligned}
\mathcal N_A &: \mathcal L(\H_{AA'})\longrightarrow\mathcal L(\H_{a_0e}),
&\mathcal M_B &: \mathcal L(\H_{BB'})\longrightarrow\mathcal L(\H_{b_0f}),\\
\widetilde{\mathcal N_A} &: \mathcal L(\H_{a_0f})\longrightarrow\mathcal L(\H_A),
&\widetilde{\mathcal M_B} &: \mathcal L(\H_{b_0e})\longrightarrow\mathcal L(\H_B).
\end{aligned}
\end{equation}

Thus $a_0,b_0$ are kept locally, while $e$ and $f$ are exchanged simultaneously. We temporarily use $a_0,b_0$ because the final local registers $a,b$ will also contain dilation spaces. Every channel is completely positive and trace preserving. 

The finite-dimensional Stinespring theorem gives, for a channel $\mathcal N:\mathcal L(H)\to\mathcal L(K)$, an auxiliary Hilbert space $H_0$ and an isometry $V:H\to K\otimes H_0$ such that (see \cite[Corollary 2.27, in particular items 5--6]{Watrous2018})
\begin{equation}\label{eq:stinespring}
\mathcal N(\rho)=\Tr_{H_0}(V\rho V^\dagger),
\qquad V^\dagger V=I_H,
\qquad \dim H_0\le(\dim H)(\dim K).
\end{equation}

Applying this representation to the four channels in \eqref{eq:channel-spaces} gives the isometries:
\begin{equation}\label{eq:raw-isometries}
\begin{aligned}
V_A^0 &: \H_{AA'}\longrightarrow\H_{a_0eH_A},
&W_B^0 &: \H_{BB'}\longrightarrow\H_{b_0fH_B},\\
\widetilde{V}_A^0 &: \H_{a_0f}\longrightarrow\H_{AK_A},
&\widetilde{W}_B^0 &: \H_{b_0e}\longrightarrow\H_{BK_B}.
\end{aligned}
\end{equation}

Let $\rho_0=\ketbra{\psi}{\psi}\otimes\ketbra{\tau}{\tau}$. With the tensor factors ordered according to the indicated registers, the first round, exchange, and second round give
\begin{align}
\rho_1&=(I_R\otimes V_A^0\otimes W_B^0)\rho_0
                    (I_R\otimes V_A^0\otimes W_B^0)^\dagger,\nonumber\\
\rho_2&=(I_R\otimes I_{H_A}\otimes I_{H_B}\otimes\Sigma)\rho_1(I_R\otimes I_{H_A}\otimes I_{H_B}\otimes\Sigma)^\dagger,\nonumber\\
\rho_3&=(I_R\otimes \widetilde{V}_A^0\otimes I_{H_A}\otimes \widetilde{W}_B^0\otimes I_{H_B})\rho_2
        (I_R\otimes \widetilde{V}_A^0\otimes I_{H_A}\otimes \widetilde{W}_B^0\otimes I_{H_B})^\dagger.
\label{eq:channel-evolution}
\end{align}
Here $\Sigma:\H_{a_0e}\otimes\H_{b_0f}\to\H_{a_0f}\otimes\H_{b_0e}$ only exchanges $e,f$, leaving $a_0,b_0$ unchanged; the identities on the dilation spaces $H_A,H_B$ are written separately. Canonical permutations are used to put each map next to its input registers; they do not change any norm. The physical output and its acceptance probability are
\begin{equation}\label{eq:channel-success}
\rho_{\rm out}=\Tr_{K_AH_AK_BH_B}\rho_3,
\qquad
p_\varepsilon(\mathcal S)=\Tr\big[(\Pi_\varepsilon\otimes I_{K_AH_AK_BH_B})\rho_3\big]
=\Tr(\Pi_\varepsilon\rho_{\rm out}).
\end{equation}
Here $\mathcal S$ denotes the fixed channel strategy. Thus, the referee is never acted on by the adversaries; the identities on $R$ and on the unmeasured output registers are explicit.

By adjoining fixed pure ancillas, embedding the input and output spaces of each isometry into a common local workspace, and completing orthonormal bases, we can define the first and second round by using unitaries 
\[
V:\H_{Aae}\to\H_{Aae},\qquad
W:\H_{Bbf}\to\H_{Bbf},
\]
\[
\widetilde V:\H_{Aaf}\to\H_{Aaf},\qquad
\widetilde W:\H_{Bbe}\to\H_{Bbe},
\] and a new state \[
\ket{\phi}\in\H_{ae}\otimes\H_{bf},
\]which reproduce the original previous actions without introducing
exponentially larger registers.

For the norm calculation we fix the purified and padded registers. By embedding smaller registers and extending the unitaries, we may assume
\begin{equation}\label{eq:dimension-convention}
\dim A=\dim B=n,\qquad \dim a=\dim b=\dim e=\dim f=k.
\end{equation}
Thus $\dim(ae)=\dim(bf)=\dim(af)=\dim(be)=k^2$. The parameter $n$ is fixed by the game, whereas $k$ depends on the strategy. If $K$ bounds the original auxiliary dimensions, \eqref{eq:stinespring} bounds all dilation dimensions polynomially in $n,K$. Absorbing the environments into local workspaces and performing the above embeddings therefore gives $k\leq\operatorname{poly}(n,K)$. Consequently, exponential lower bounds in the padded convention imply exponential lower bounds on the original channel registers, up to a polynomial change of dimension.

The exchange is the unitary permutation $T : \H_{Aae}\otimes\H_{Bbf}
\longrightarrow
\H_{Aaf}\otimes\H_{Bbe}$, defined by 
\begin{equation}\label{eq:exchange-full}
T(\ket{i}_A\ket{\alpha}_a\ket{u}_e
  \ket{j}_B\ket{\beta}_b\ket{v}_f)
=
\ket{i}_A\ket{\alpha}_a\ket{v}_f
\ket{j}_B\ket{\beta}_b\ket{u}_e.
\end{equation}

Thus, $T$ acts as the identity on $A,a,B,b$ and exchanges only the communicated registers $e$ and $f$. Write
\[
E_{\rm in}=aebf,\qquad E_{\rm out}=afbe,
\]
for the residual registers before and after the exchange, and put
\[
\mathcal U=(\widetilde V\otimes\widetilde W)
T(V\otimes W).
\]
The full final vector is
\[
(I_R\otimes\mathcal U)(\ket{\psi}\otimes\ket{\phi}),
\]
and the success probability of this fixed strategy is
\begin{equation}\label{eq:projected-vector}
p_\varepsilon(\mathcal S)=
\left\|
(\bra{\psi_\varepsilon}_{RAB}\otimes I_{E_{\rm out}})
(I_R\otimes\mathcal U)
(\ket{\psi}_{RAB}\otimes\ket{\phi}_{E_{\rm in}})
\right\|_{\H_{E_{\rm out}}}^2.
\end{equation}

Now, let us write the expression in Equation (\ref{eq:projected-vector}) in a way that will be very useful later. To this end, define the input and output blocks by
\begin{equation}\label{eq:block-spaces}
\begin{aligned}
V_i&=V(\ket{i}_A\otimes I_{ae})
    :\H_{ae}\to\H_{Aae},
&
W_j&=W(\ket{j}_B\otimes I_{bf})
    :\H_{bf}\to\H_{Bbf},
\\
\widetilde V_i
&=(\bra{i}_A\otimes I_{af})\widetilde V
    :\H_{Aaf}\to\H_{af},
&
\widetilde W_j
&=(\bra{j}_B\otimes I_{be})\widetilde W
    :\H_{Bbe}\to\H_{be}.
\end{aligned}
\end{equation}
Since these blocks come from common unitaries, their constraints must be
kept jointly. In particular,
\begin{equation}\label{eq:block-constraints}
\begin{gathered}
V_i^\dagger V_r=\delta_{ir}I_{ae},
\qquad
W_j^\dagger W_s=\delta_{js}I_{bf},
\\
\sum_i\widetilde V_i^\dagger\widetilde V_i=I_{Aaf},
\qquad
\sum_j\widetilde W_j^\dagger\widetilde W_j=I_{Bbe}.
\end{gathered}
\end{equation}
Thus the output blocks are contractions, but they are not in general
isometries individually. We always optimize over the full unitaries,
rather than over independently chosen blocks.

Expanding the two referee vectors in \eqref{eq:projected-vector}, the
scalar products in $R$ give
$\langle rs|ij\rangle=\delta_{ri}\delta_{sj}$. Hence the remaining
vector is
\begin{equation}\label{eq:amplitude-vector}
\frac1{n^2}\sum_{i,j=1}^n
\varepsilon_{ij}
(\widetilde V_i\otimes\widetilde W_j)
T(V_i\otimes W_j)\ket{\phi}
=
\frac1{n^2}\Gamma_\varepsilon\ket{\phi},
\end{equation}
where
\[
\Gamma_\varepsilon:\H_{E_{\rm in}}\to\H_{E_{\rm out}}
\]
denotes the operator appearing in the sum. Since each of the two
normalized referee vectors contributes a factor $1/n$, squaring the
amplitude gives the factor $1/n^4$. Let $\omega_k(G_\varepsilon)$ denote the supremum of $p_\varepsilon(\mathcal S)$ over all strategies in the fixed padded spaces. Optimizing first over the shared
state therefore yields
\begin{equation}\label{eq:game-operator-value}
\omega_k(G_\varepsilon)
=
\frac1{n^4}
\sup_{V,W,\widetilde V,\widetilde W}
\|\Gamma_\varepsilon\|_
{S_\infty^{E_{\rm out},E_{\rm in}}}^{\,2}.
\end{equation}
Equivalently, by Hilbert-space duality,
\begin{equation}\label{eq1}
\omega_k(G_\varepsilon)
=
\frac1{n^4}
\sup
\left|
\sum_{i,j=1}^n
\varepsilon_{ij}
\bra{\widetilde\phi}
(\widetilde V_i\otimes\widetilde W_j)
T(V_i\otimes W_j)
\ket{\phi}
\right|^2,
\end{equation}
where the four maps are unitaries on the spaces above and
$\|\phi\|=\|\widetilde\phi\|=1$. The vector
$\widetilde\phi$ only dualizes the norm of the unmeasured final vector;
it is not an additional entangled state available to the adversaries.

The Hilbert norm on $E_{\rm out}$ in
\eqref{eq:projected-vector} is therefore only an intermediate norm. Our
goal in the next section is to identify a Banach-space norm that captures the
\emph{whole optimization} in \eqref{eq1}.

\section{The relevant Banach spaces, and a conjectural exponential lower bound}\label{sec:conjectural}
We use throughout the row--column convention and bilinear trace duality of \eqref{eq:trace-duality}. We keep the notation $\sigma$. For the first round define
\begin{equation}\label{eq:first-round-operator}
P_{V,W,\phi}=\sum_{i,j=1}^n(V_i\otimes W_j)\ket{\phi}\bra{ij}
:\H_{AB}\longrightarrow\H_{AaeBbf}.
\end{equation}
In other words, $P_{V,W,\phi}=(V\otimes W)(I_{AB}\otimes\ket{\phi})$, with the factors in local order. On $S_1^{Aae,A}\otimes S_1^{Bbf,B}$ put
\begin{equation}\label{eq:sigma-first}
\|x\|_\sigma=\sup_{V,W,\phi}|\Tr(x^TP_{V,W,\phi})|,
\end{equation}
where $V,W$ are contractions on their full square spaces and $\|\phi\|=1$. Replacing the contractions by unitaries does not change this supremum since every square contraction is a convex combination of unitaries.

For the second round define
\begin{equation}\label{eq:second-round-operator}
Q_{\widetilde V,\widetilde W,\widetilde\phi}
=\sum_{i,j=1}^n\ket{ij}\bra{\widetilde\phi}
(\widetilde V_i\otimes\widetilde W_j)
:\H_{AafBbe}\longrightarrow\H_{AB},
\end{equation}
and, on $S_1^{A,Aaf}\otimes S_1^{B,Bbe}$, put
\begin{equation}\label{eq:sigma-second}
\|y\|_\sigma=\sup_{\widetilde V,\widetilde W,\widetilde\phi}
|\Tr(Q_{\widetilde V,\widetilde W,\widetilde\phi}y^T)|.
\end{equation}
Again the full square maps are contractions and $\|\widetilde\phi\|=1$, or equivalently the full maps may be required to be unitaries. Transposition turns $Q$ into a first-round operator with transposed matrices and the coordinate-conjugate vector $\overline{\widetilde\phi}$. Thus \eqref{eq:sigma-second} is the corresponding first-round norm transported by transposition, not by a bilinear use of the adjoint.

Every linear map $V:\H_{Aae}\to\H_{Aae}$ is naturally identified with a linear map $R_V:S_1^{Aae,A}\to\H_{ae}$. In the fixed orthonormal basis $(\ket{\eta})$ of $\H_{ae}$, this identification is
\[
R_V(x)=\sum_\eta
\Tr\!\left[x^TV(I_A\otimes\ket{\eta})\right]\ket{\eta}.
\]
Define $R_W$ analogously. Expanding the bilinear trace pairing and taking the supremum over $\phi$ gives
\begin{equation}\label{eq:sigma-first_alt}
\|x\|_\sigma=\sup_{V,W}\|(R_V\otimes R_W)(x)\|_{\H_{ae}\otimes_2\H_{bf}},
\end{equation}
where the supremum is over the full contractions $V,W$, and $\H_{ae}\otimes_2\H_{bf}=\H_{aebf}$.  It follows then that the previous formula for $\|x\|_\sigma$ defines a norm and the same happens for the expression for $\|y\|_\sigma$. 

We have therefore obtained the spaces
\begin{equation}\label{eq:X-spaces}
\boxed{\begin{aligned}
X&=S_1^{Aae,A}\otimes_\sigma S_1^{Bbf,B},\\
\widetilde X&=S_1^{A,Aaf}\otimes_\sigma S_1^{B,Bbe}.
\end{aligned}}
\end{equation}
The order of the indices is not optional: $P$ maps $AB$ \emph{into} the first-round output, whereas $Q$ maps the second-round input \emph{into} $AB$.

Moreover, by definition we have
\begin{equation}\label{eq:polar-balls}
B_{X^*}=\overline{\operatorname{aconv}}\{P_{V,W,\phi}\},
\qquad
B_{\widetilde X^*}=\overline{\operatorname{aconv}}\{Q_{\widetilde V,\widetilde W,\widetilde\phi}\},
\end{equation}
where $\operatorname{aconv}$ is the absolutely convex hull. 

Let us consider the element 
\begin{align*}
\Phi_\varepsilon=\frac1{n^2}\sum_{i,j=1}^n\varepsilon_{ij}T\otimes \ketbra{ij}{ij}_{AB}\in \tilde{X}\otimes X,
\end{align*}where here $T$ denotes the tensor associated to the map defined in (\ref{eq:exchange-full}). More precisely, we can write 
\begin{equation}\label{T_epsilon_tensor}
\Phi_\varepsilon=\frac1{n^2}\sum_{i,j=1}^n\sum_{AaeBbf}\varepsilon_{ij}\, (|i\rangle\langle Aaf|\otimes |j\rangle\langle Bbe|)\otimes (|Aae\rangle\langle i|\otimes |Bbf\rangle\langle j|).
\end{equation}

The following result expresses the value of the game as the injective tensor norm of this tensor.
\begin{prop}\label{prop:exact-value}
For the fixed normal-form dimensions in \eqref{eq:dimension-convention},
\begin{equation}\label{eq:value-as-norm}
\omega_k(G_\varepsilon)=\|\Phi_\varepsilon\|_{\widetilde X\otimes_\epsilon X}^{\,2}.
\end{equation}
\end{prop}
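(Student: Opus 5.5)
The plan is to unfold the injective norm by duality and match it term by term with \eqref{eq1}. In finite dimensions $(\widetilde X\otimes_\epsilon X)$ is the dual of $\widetilde X^*\otimes_\pi X^*$, so
\[
\|\Phi_\varepsilon\|_{\widetilde X\otimes_\epsilon X}=\sup\bigl\{|\langle \beta\otimes\alpha,\Phi_\varepsilon\rangle| : \alpha\in B_{X^*},\ \beta\in B_{\widetilde X^*}\bigr\}.
\]
The functional $(\beta,\alpha)\mapsto\langle\beta\otimes\alpha,\Phi_\varepsilon\rangle$ is bilinear, hence separately linear. Therefore its supremum over the absolutely convex hulls in \eqref{eq:polar-balls} is attained on the generating sets, taking closures into account by continuity and finite dimensionality. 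The norm thus reduces to
\[
\sup_{V,W,\phi,\widetilde V,\widetilde W,\widetilde\phi}\bigl|\langle Q_{\widetilde V,\widetilde W,\widetilde\phi}\otimes P_{V,W,\phi},\Phi_\varepsilon\rangle\bigr|,
\]
with the four maps unitaries (equivalently contractions) and $\phi,\widetilde\phi$ unit vectors.

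The core step is the explicit evaluation of this pairing using \eqref{T_epsilon_tensor}. Pairing the $X$-factor $\ketbra{Aae}{i}\otimes\ketbra{Bbf}{j}$ with $P_{V,W,\phi}$ via the bilinear trace pairing $\Tr(x^TP)$ picks out the matrix coordinate $\langle Aae,Bbf|(V_i\otimes W_j)|\phi\rangle$. Similarly, pairing the $\widetilde X$-factor $\ketbra{i}{Aaf}\otimes\ketbra{j}{Bbe}$ with $Q$ picks out $\langle\widetilde\phi|(\widetilde V_i\otimes\widetilde W_j)|Aaf,Bbe\rangle$. The dependence on $\bra{ij}$ or $\ket{ij}$ inside $P$ and $Q$ forces both factors to carry the same index pair $(i,j)$.

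I then sum over the basis labels, with the relabeling $e\leftrightarrow f$ encoded in $T$ as in \eqref{eq:exchange-full}. This sum reassembles as
\[
\bra{\widetilde\phi}(\widetilde V_i\otimes\widetilde W_j)\,T\,(V_i\otimes W_j)\ket{\phi},
\]
so the pairing equals
\[
\frac1{n^2}\sum_{i,j}\varepsilon_{ij}\bra{\widetilde\phi}(\widetilde V_i\otimes\widetilde W_j)T(V_i\otimes W_j)\ket{\phi}.
\]
Squaring and taking the supremum gives exactly \eqref{eq1}, hence $\omega_k(G_\varepsilon)$.

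The main obstacle is bookkeeping rather than ideas. The transposes in the bilinear pairing must be handled correctly, along with the order of row and column indices in $S_1^{Aae,A}$ versus $S_1^{A,Aaf}$ and the fact that $Q$'s pairing is $\Tr(Qy^T)$. Mishandling these would produce a conjugated $\phi$ or a transposed block instead of the actual block. I would verify this on elementary tensors first. Any complex conjugation of $\widetilde\phi$ is harmless, because $\widetilde\phi$ ranges over all unit vectors.

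A secondary check is that optimizing over contractions rather than unitaries does not enlarge the value, which follows from the convex-combination remark after \eqref{eq:sigma-first}. It also remains to confirm that the block constraints \eqref{eq:block-constraints} are automatically encoded, since $P$ and $Q$ are built from full unitaries.
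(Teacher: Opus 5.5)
Your proposal is correct and follows the same route as the paper. Like the paper, you write the injective norm as a supremum over $B_{\widetilde X^*}\times B_{X^*}$, use convexity to reduce to the generators $Q_{\widetilde V,\widetilde W,\widetilde\phi}$ and $P_{V,W,\phi}$ of \eqref{eq:polar-balls}, and evaluate the trace pairing against \eqref{T_epsilon_tensor} to recover \eqref{eq1}; your explicit computation of $\Tr(x^TP)$ and $\Tr(Qy^T)$ on elementary tensors just spells out the step the paper leaves implicit.
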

\begin{proof}
By the definition of the injective norm, a convexity argument and \eqref{eq:polar-balls},
\begin{align*}
\|\Phi_\varepsilon\|_{\widetilde X\otimes_\epsilon  X}^{\,2}&=\sup_{P_{V,W,\phi}, Q_{\widetilde V,\widetilde W,\widetilde\phi}}
\Big|\Tr\Big((Q_{\widetilde V,\widetilde W,\widetilde\phi}\otimes P_{V,W,\phi})\Phi_\varepsilon^T\Big)\Big|\\&=\frac1{n^4}\sup_{P_{V,W,\phi}, Q_{\widetilde V,\widetilde W,\widetilde\phi}}\Big|\sum_{i,j=1}^n
\varepsilon_{ij}
\langle \widetilde\phi|(\widetilde V_i\otimes\widetilde W_j)
T(V_i\otimes W_j)\ket{\phi}\Big|^2,
\end{align*}which corresponds to $\omega_k(G_\varepsilon)$ according to Eq. (\ref{eq:game-operator-value}).

\end{proof}

Since the norm $\sigma$ is tricky to work with, we will upper-bound it with a more tractable one.
\begin{lemma}\label{lem:bound-sigma-interpolation}
On each of the two oriented tensor products in \eqref{eq:X-spaces},
$\sigma\leq(\epsilon,\pi)_{1/2}$.
\end{lemma}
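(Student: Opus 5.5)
The plan is to prove the inequality using only the metric mapping property of the tensor norm $(\epsilon,\pi)_{1/2}$ and the Hilbertian identification $\H_{ae}\otimes_{(\epsilon,\pi)_{1/2}}\H_{bf}=\H_{aebf}$ from Section~\ref{sec:basics}. The starting point is the formula \eqref{eq:sigma-first_alt},
\[
\|x\|_\sigma=\sup_{V,W}\|(R_V\otimes R_W)(x)\|_{\H_{ae}\otimes_2\H_{bf}}.
\]
It therefore suffices to show that, for every pair of contractions $V,W$,
\[
\|(R_V\otimes R_W)(x)\|_{\H_{aebf}}\le\|x\|_{S_1^{Aae,A}\otimes_{(\epsilon,\pi)_{1/2}}S_1^{Bbf,B}}.
\]

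\emph{Step 1 (single-factor bound).} First I will show that $\|R_V:S_1^{Aae,A}\to\H_{ae}\|\le\|V\|_\infty\le1$. The unit ball of $S_1^{Aae,A}$ is the closed absolutely convex hull of the rank-one operators $\ketbra{u}{v}$ with $\|u\|=\|v\|=1$. So it is enough to test $R_V$ on these. For $x=\ketbra{u}{v}$ we have $x^T=\ketbra{\bar v}{\bar u}$, which gives
\[
\Tr[x^TV(I_A\otimes\ket{\eta})]=\bra{\bar u}V(\ket{\bar v}\otimes\ket{\eta}).
\]
Hence $R_V(x)$ is the vector whose coordinates in the basis $(\ket\eta)$ are the entries of the row $\bra{\bar u}V(\ket{\bar v}\otimes\,\cdot\,)$. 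In other words, $R_V(x)$ is (the coordinate-conjugate of) the vector $(\bra{\bar v}\otimes I_{ae})V^\dagger\ket{\bar u}$. Its norm is at most $\|V\|_\infty\|u\|\|v\|\le1$. The same argument applies to $R_W$.

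\emph{Step 2 (tensor and interpolation).} Since $(\epsilon,\pi)_{1/2}$ is a tensor norm, its metric mapping property gives
\[
\|R_V\otimes R_W:S_1^{Aae,A}\otimes_{(\epsilon,\pi)_{1/2}}S_1^{Bbf,B}\to\H_{ae}\otimes_{(\epsilon,\pi)_{1/2}}\H_{bf}\|\le1.
\]
This can also be seen directly from Theorem~\ref{Interp_thm}, interpolating between the $\epsilon$ and $\pi$ endpoints, where the bound $\le1$ holds for any pair of contractions. The Hilbertian identification then says the target space is $\H_{aebf}$ isometrically. This yields the displayed inequality above, and taking the supremum over $V,W$ gives $\sigma\le(\epsilon,\pi)_{1/2}$ on $X$.

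\emph{Step 3 (second round).} For $\widetilde X$ I will use the remark after \eqref{eq:sigma-second}: transposition turns $Q_{\widetilde V,\widetilde W,\widetilde\phi}$ into a first-round operator built from transposed contractions and the vector $\overline{\widetilde\phi}$. Transposition $S_1^{A,Aaf}\to S_1^{Aaf,A}$ is an isometry, so it preserves both the $\epsilon$ and $\pi$ norms, and hence, by Theorem~\ref{Interp_thm}, the interpolated norm too. The $\sigma$-norm on $\widetilde X$ is thus the image of a first-round $\sigma$-norm under an isometry of each factor, and Steps 1--2 apply verbatim.

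The only point needing care, which I expect to be the main obstacle, is the index bookkeeping in Step 1. One must check that the bilinear (rather than sesquilinear) pairing still gives a contraction: the conjugations appear on $u$ and $v$, but they do not affect norms. The rest is formal.
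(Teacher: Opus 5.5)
Your proof is correct and follows essentially the paper's argument. Like the paper, you start from \eqref{eq:sigma-first_alt}, use that $R_V$ and $R_W$ are contractions into $\H_{ae}$ and $\H_{bf}$, pass the $\epsilon$- and $\pi$-endpoint bounds through Theorem~\ref{Interp_thm}, and identify the interpolated target with $\H_{aebf}$. Your rank-one check of $\|R_V\|\le1$ and your transposition reduction for $\widetilde X$ simply make explicit what the paper calls easy and completely analogous.
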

\begin{proof}
We only prove the statement for $X$ since the proof for $\widetilde X$ is completely analogous.

As we have seen in Eq. (\ref{eq:sigma-first_alt}), we can write 
\begin{align*}
\|x\|_\sigma=\sup_{V,W}\|(R_V\otimes R_W)(x)\|_{\H_{ae}\otimes_2\H_{bf}},
\end{align*}where the supremum is taken over all contractions $V:\H_{Aae}\to \H_{Aae}$ and $W:\H_{Bbf}\to \H_{Bbf}$ and $\H_{ae}\otimes_2\H_{bf}=\H_{aebf}$. Now, for the contractions $V$ ad $W$, it is easy to deduce that $\|R_V:S_1^{Aae,A}\to \H_{ae}\|\leq 1$ and $\|R_W:S_1^{Bbf,B}\to \H_{bf}\|\leq 1$. Hence, the metric mapping property of the $\epsilon$ and $\pi$ norms implies \begin{align*}&\|R_V\otimes R_W:S_1^{Aae,A}\otimes_{\pi}S_1^{Aae,A}\to \H_{ae}\otimes_{\pi} \H_{bf}\|\leq 1,\\&\|R_V\otimes R_W:S_1^{Aae,A}\otimes_{\epsilon}S_1^{Aae,A}\to \H_{ae}\otimes_{\epsilon} \H_{bf}\|\leq 1.
\end{align*} Then, the result follows from Theorem \ref{Interp_thm} and the well known fact $(\H_{ae}\otimes_{\epsilon} \H_{bf}, \H_{ae}\otimes_{\pi} \H_{bf})_{1/2}=\H_{ae}\otimes_2 \H_{bf}$.
\end{proof}

\begin{rem}\label{rem:gamma2}
The same argument also works with $(\epsilon,\gamma_2^*)_{1/2}$. Here, for a tensor $u\in U\otimes V$, $\gamma_2(u)$ is the infimum of $\|a\|\|b\|$ over factorizations of its associated map $U^*\to V$ as $b a$ through a Hilbert space, and $\gamma_2^*$ is its dual tensor norm under the bilinear tensor pairing; see \cite{DefantFloret1993}. On two Hilbert factors, $\gamma_2$ is the operator norm and $\gamma_2^*$ is the trace norm. Therefore their relevant interpolation endpoint is again Hilbertian and
\[
\sigma\le(\epsilon,\gamma_2^*)_{1/2}\le(\epsilon,\pi)_{1/2}.
\]
This observation concerns the domination of $\sigma$. It does not identify the two interpolated spaces in general, nor does it transfer \eqref{eq3} to the smaller norm. We retain $(\epsilon,\pi)_{1/2}$ in the estimates below.
\end{rem}

Consequently,
\begin{equation}\label{eq:Z-bound}
\omega_k(G_\varepsilon)\le\|\Phi_\varepsilon\|_Z^2,
\end{equation}
where $Z=\widetilde Y\otimes_\epsilon Y$ and 
\begin{equation}\label{eq:Y-spaces}
\boxed{\begin{aligned}
Y&=S_1^{Aae,A}\otimes_{(\epsilon,\pi)_{1/2}}S_1^{Bbf,B},\\
\widetilde Y&=S_1^{A,Aaf}\otimes_{(\epsilon,\pi)_{1/2}}S_1^{B,Bbe}
\end{aligned}}
\end{equation}
Under the padding convention \eqref{eq:dimension-convention}, transposition identifies $Y$ and $\widetilde Y$ isometrically. Without such a dimension convention this identification need not hold, since $\dim(ae)$ and $\dim(af)$ may differ.

The following tensor estimate will be used below.  Let us denote $F_{i,j}=T\otimes \ketbra{ij}{ij}$ for every $i,j=1,\ldots, n$.
\begin{lemma}\label{lem:corners}
For all $i,j=1,\ldots,n$ we have $\|F_{ij}\|_Z=1$.
\end{lemma}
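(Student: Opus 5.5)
The plan is to compute $\|F_{ij}\|_Z$ exactly. We reduce all four factors to Hilbert spaces using the $1$-complemented subspace principle recalled in Section~\ref{sec:basics}, and then recognise the injective norm of $F_{ij}$ as the operator norm of the exchange $T$, which is a unitary.

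\emph{Step 1 (first-round factors).} In $S_1^{Aae,A}$, consider the subspace $E_i=\{\ket{\xi}\bra{i}:\xi\in\H_{Aae}\}$. For a rank-one operator the trace norm equals $\|\xi\|\,\|\ket{i}\|=\|\xi\|$, so $E_i$ is isometric to the Hilbert space $\H_{Aae}$. The map $x\mapsto x\ketbra{i}{i}$ is a projection onto $E_i$ of norm at most one in $S_1$, so $E_i$ is $1$-complemented. The same holds for $E'_j=\{\ket{\zeta}\bra{j}\}\subset S_1^{Bbf,B}$. Since $(\epsilon,\pi)_{1/2}$ is a tensor norm, it induces the same norm on $E_i\otimes E'_j$ as on the corresponding Hilbert spaces. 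By the Hilbertian identification $\H\otimes_{(\epsilon,\pi)_{1/2}}\H'=\H\otimes_2\H'$, the subspace $E_i\otimes E'_j\subset Y$ is therefore isometric to $\H_{Aae}\otimes_2\H_{Bbf}$, and it is $1$-complemented in $Y$.

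\emph{Step 2 (second-round factors).} The same argument applies in $\widetilde Y$. The subspaces $\{\ket{i}\bra{\eta}\}\subset S_1^{A,Aaf}$ and $\{\ket{j}\bra{\theta}\}\subset S_1^{B,Bbe}$ are isometric to $\H_{Aaf}$ and $\H_{Bbe}$ (using the bilinear, unconjugated identification of $\bra{\eta}$ with a vector) and are $1$-complemented via left multiplication by $\ketbra{i}{i}$ and $\ketbra{j}{j}$. So the relevant subspace $\widetilde E_{ij}$ of $\widetilde Y$ is isometric to $\H_{Aaf}\otimes_2\H_{Bbe}$ and $1$-complemented.

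\emph{Step 3 (reduction to an operator norm).} By \eqref{T_epsilon_tensor}, $F_{ij}$ lies in $\widetilde E_{ij}\otimes (E_i\otimes E'_j)$. The injective norm is injective: a subspace of each factor carries the restricted norm, which here also follows from complementation. Hence
\[
\|F_{ij}\|_Z=\|F_{ij}\|_{\H_{AafBbe}\otimes_\epsilon\H_{AaeBbf}}.
\]
The injective norm of a tensor in a product of two Hilbert spaces, under the bilinear pairing, is the operator norm of the associated map between the dual of one factor and the other. For $F_{ij}$ this map sends $\ket{Aae\,Bbf}\mapsto\ket{Aaf\,Bbe}$ on basis vectors with $A=i$, $B=j$ fixed. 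This is exactly the restriction of the permutation unitary $T$ from \eqref{eq:exchange-full}, acting as the swap $e\leftrightarrow f$ on $\H_{aebf}$. A permutation of an orthonormal basis is unitary, so its operator norm is $1$, and therefore $\|F_{ij}\|_Z=1$.

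\emph{Sanity check.} One can also obtain the lower bound directly. Since $\sigma\le(\epsilon,\pi)_{1/2}$ by Lemma~\ref{lem:bound-sigma-interpolation}, we have $B_{X^*}\subset B_{Y^*}$, so $\|F_{ij}\|_Z\ge\|F_{ij}\|_{\widetilde X\otimes_\epsilon X}$. Taking trivial strategies, with $V_i=\ket{i}\otimes I$, $\widetilde V_i=\bra{i}\otimes I$ and likewise for $W,\widetilde W$, and choosing $\widetilde\phi=T\phi$, gives value $1$.

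\emph{Main obstacle.} The only delicate point is bookkeeping in Step 3. One must check that, with the row--column convention and the bilinear (not sesquilinear) trace pairing, the identifications of Steps 1 and 2 turn $F_{ij}$ into the matrix of $T$ and not of some non-unitary rearrangement. Any such mismatch would at worst produce a transpose or a coordinate conjugation of a permutation matrix, which is still unitary, so the conclusion is robust.
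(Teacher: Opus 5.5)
Your proof is correct and follows the paper's argument. You pass to the $1$-complemented row/column subspaces fixed by $i,j$, where $(\epsilon,\pi)_{1/2}$ becomes the Hilbertian norm, and then identify the resulting injective norm with the operator norm of the exchange $T$, which is $1$.

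One harmless slip in Step~3: in \eqref{T_epsilon_tensor} the labels $A,B$ inside $\ket{Aae}$ and $\bra{Aaf}$ are summed over all values, and only the referee indices $i,j$ are fixed. So the associated map is the full permutation $T$ on $\H_{AaeBbf}$, not its restriction to $A=i$, $B=j$; both have norm $1$, so the conclusion is unaffected.
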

\begin{proof}
Following the notation in Eq. (\ref{T_epsilon_tensor}), we need to study the norm of the element
$$F_{ij}=\sum_{AaeBbf} (|i\rangle\langle Aaf|\otimes |j\rangle\langle Bbe|)\otimes (|Aae\rangle\langle i|\otimes |Bbf\rangle\langle j|)$$ in $$\Big(S_1^{A,Aaf}\otimes_{(\epsilon,\pi)_{1/2}}S_1^{B,Bbe}\Big)\otimes_\epsilon \Big(S_1^{Aae,A}\otimes_{(\epsilon,\pi)_{1/2}}S_1^{Bbf,B}\Big).$$

Now, since $i$ and $j$ are fixed, we see that we are actually working on rows or columns of the $S_1$ spaces, which are actually 1-complemented spaces. Hence, we have 
\begin{align*}\|F_{ij}\|_Z&=\Big\|\sum_{AaeBbf} (e_{Aaf}\otimes e_{Bbe})\otimes (e_{Aae}\otimes e_{Bbf})\Big\|_{(\H_{Aaf}\otimes_{(\epsilon,\pi)_{1/2}}\H_{Bbe})\otimes_{\epsilon}(\H_{Aae}\otimes_{(\epsilon,\pi)_{1/2}}\H_{Bbf})}\\&=\Big\|\sum_{AaeBbf} (e_{Aaf}\otimes e_{Bbe})\otimes (e_{Aae}\otimes e_{Bbf})\Big\|_{(\H_{Aaf}\otimes_2\H_{Bbe})\otimes_{\epsilon}(\H_{Aae}\otimes_2\H_{Bbf})}
\end{align*}

Now, this norm is exactly the operator norm $$\|T:\H_{AafBbe}\to \H_{AaeBbf}\|,$$which equals one because $T$ is an isometry.
\end{proof}

We propose the following conjecture.
\begin{conj}\label{conj-weak}
There are constants $0\le\beta<1$, $q>0$, and $C$, independent of $n,k$, such that
\[
T_2(Z)\le Cn^\beta[\log(\dim Z)]^q.
\]
\end{conj}
It is a notorious difficult problem to estimate the type 2 constant of the injective tensor product of Banach spaces. The most common and powerful approach to give lower bounds is to consider the so-called volume ratio of the dual space. The estimate one obtains in this case, which can be deduced similarly as in \cite[Theorem 5.4]{JungeKubickiPalazuelosPerezGarcia2022}, is compatible with the conjecture, with $\beta=3/4$ and $q=1$.

\begin{thm}[Conjectural lower bound]\label{thm:conjectural}
Assume Conjecture~\ref{conj-weak}. For every $0<\delta<(1-\beta)/q$ and every fixed $c>0$, all sufficiently large $n$ admit a sign string $\varepsilon$ for which every strategy with success probability at least $c$ requires padded local dimension $k>\exp(n^\delta)$. The corresponding lower bound on the original channel registers is $\exp(\Omega(n^\delta))$.
\end{thm}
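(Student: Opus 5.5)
The plan is to average the game value over a uniformly random sign string $\varepsilon\in\{\pm1\}^{n^2}$ and use the type 2 constant of $Z$ to control the average. By Proposition~\ref{prop:exact-value}, Lemma~\ref{lem:bound-sigma-interpolation} and \eqref{eq:Z-bound}, for every $\varepsilon$ and every $k$ we have $\omega_k(G_\varepsilon)\le\|\Phi_\varepsilon\|_Z^2$. Recall that
\[
\Phi_\varepsilon=\frac1{n^2}\sum_{i,j}\varepsilon_{ij}F_{ij}.
\]
This is a Rademacher sum of $N=n^2$ vectors in $Z$.

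The key step is the type 2 definition combined with Lemma~\ref{lem:corners}, which gives $\|F_{ij}\|_Z=1$. Together they yield
\[
\E_\varepsilon\,\omega_k(G_\varepsilon)\le\E_\varepsilon\|\Phi_\varepsilon\|_Z^2\le\frac{T_2(Z)^2}{n^4}\sum_{i,j}\|F_{ij}\|_Z^2=\frac{T_2(Z)^2}{n^2}.
\]
Next I invoke Conjecture~\ref{conj-weak}. Since $\dim Z$ is at most $(n^2k^2)^{8}$ (each $S_1$ factor has dimension at most $n^2k^2$, and there are four of them), we get $\log\dim Z\le C'(\log n+\log k)$. Hence
\[
\E_\varepsilon\,\omega_k(G_\varepsilon)\le C^2 n^{2\beta-2}\,C'^{2q}(\log n+\log k)^{2q}.
\]

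To finish, I argue by contradiction. Suppose that for every $\varepsilon$ there is a strategy with success at least $c$ and padded dimension $k\le\exp(n^\delta)$. Since the auxiliary dimension can always be enlarged by padding without loss, we may take $\omega_K(G_\varepsilon)\ge c$ for all $\varepsilon$ with the single value $K=\lfloor\exp(n^\delta)\rfloor$; this uses that $\omega_k$ is monotone in $k$, via embedding. Then $\log K\le n^\delta$, so
\[
c\le C''n^{2\beta-2}n^{2q\delta}=C''n^{-2(1-\beta-q\delta)}.
\]
The right-hand side tends to $0$ because $\delta<(1-\beta)/q$. This is a contradiction for $n$ large, so some $\varepsilon$ with $\omega_K(G_\varepsilon)<c$ exists. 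Monotonicity in $k$ then excludes every $k\le\exp(n^\delta)$ for that same $\varepsilon$.

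The translation to the original channel registers uses $k\le\mathrm{poly}(n,K_{\rm orig})$. This gives $\log K_{\rm orig}\ge\Omega(\log k)-O(\log n)=\Omega(n^\delta)$.

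The main obstacle is bookkeeping rather than conceptual. I must choose a single $\varepsilon$ that works uniformly for all $k$ up to the threshold, which is handled by the monotonicity and padding argument. I must also make sure the constants in Conjecture~\ref{conj-weak} are genuinely independent of $k$, so that only the $(\log\dim Z)^q$ factor carries the $k$-dependence.
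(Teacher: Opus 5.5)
Your proposal is correct and follows essentially the same route as the paper: bound $\E_\varepsilon\,\omega_k(G_\varepsilon)$ by $T_2(Z)^2/n^2$ using type 2 and $\|F_{ij}\|_Z=1$, insert the conjecture with $\log\dim Z=O(\log n+\log k)$, take $k=\lfloor\exp(n^\delta)\rfloor$, and transfer to the original registers through the polynomial relation. Your explicit monotonicity-in-$k$ (padding) step spells out the uniformity over all $k\le\exp(n^\delta)$ that the paper leaves implicit; also, $\dim Z=(n^2k^2)^4=n^8k^8$ rather than $(n^2k^2)^8$, but your overestimate is harmless since only the logarithm enters.
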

\begin{proof}
For fixed $k$, \eqref{eq:Z-bound}, type $2$, and Lemma~\ref{lem:corners} give
\begin{equation}\label{eq:avg-bound}
\E_\varepsilon\omega_k(G_\varepsilon)
\le\E_\varepsilon\left\|\frac1{n^2}\sum_{i,j}\varepsilon_{ij}F_{ij}\right\|_Z^2
\le\frac{T_2^{(n^2)}(Z)^2}{n^2}
\le C^2 n^{-2(1-\beta)}[\log(\dim Z)]^{2q}.
\end{equation}
Here, $\dim Z=n^8k^8$. Choose $k=\lfloor\exp(n^\delta)\rfloor$. The last expression tends to zero because $q\delta<1-\beta$. Hence some $\varepsilon$ has $\omega_k(G_\varepsilon)<c$. Since $\omega_k$ already optimizes over \emph{all} strategies at that budget, this gives the stated quantifiers. 
Finally, the polynomial relation between $k$ and the original
dimension parameter $K$ established above implies the corresponding
exponential lower bound
$
K=\exp(\Omega(n^\delta)).
$
\end{proof}
In particular, $q=1$ and $\beta=3/4$ give any $\delta<1/4$ by this argument. A different exponent for $k$ would require a different, explicitly stated logarithmic estimate.

In the next section we obtain a direct resource lower bound from low-degree dependence of the full first-round operator on the public sign string. 

\section{Low-degree polynomial strategies}\label{sec:low-degree}

Let us fix a family of first-round strategies
\[
S^1=\{V^\varepsilon,W^\varepsilon,|\phi_\varepsilon\rangle\}_\varepsilon
\]
on the common padded registers fixed above. For this fixed first round, set
\[
\Psi_\varepsilon^{S^1}
=
\frac{1}{n^2}\sum_{i,j=1}^n
\varepsilon_{ij}\,
T\left(V_i^\varepsilon\otimes W_j^\varepsilon\right)
|\phi_\varepsilon\rangle\langle ij|.
\]
In order to regard this tensor as an element of $\widetilde X$, we consider its transpose,
$
\Psi_P(\varepsilon):=\left(\Psi_\varepsilon^{S^1}\right)^t\in\widetilde X.
$
Reasoning as in Proposition \ref{prop:exact-value}, by the definition of $\widetilde{X}$, one can readily see that $\|\Psi_P(\varepsilon)\|_{\widetilde X}^2$
is precisely the  optimal success probability one can get for the given fixed first round, optimized over all admissible
second rounds.
Let
$
P(\varepsilon)
=
\sum_{i,j=1}^n(V_i^\varepsilon\otimes W_j^\varepsilon)
\ket{\phi_\varepsilon}\bra{ij}
$
be the physical first-round family, viewed as a function
\[
P:\{\pm1\}^{n^2}\longrightarrow
X^*=
\left(S_1^{Aae,A}\otimes_\sigma S_1^{Bbf,B}\right)^*.
\]
By definition of $X^*$,
$
\|P(\varepsilon)\|_{X^*}\leq 1
$
for every $\varepsilon$. All registers are embedded into the same fixed
padded spaces for the whole family. The polynomial estimates below will
also apply to arbitrary $X^*$-valued functions satisfying this norm bound,
whether or not they arise from physical first-round strategies.

Set $m=n^2$ and identify $(i,j)\in[n]^2$ with $s\in[m]$. Let

\begin{equation}\label{eq:F-contractions}
F_s(R)
:=
E_sRT
\end{equation}

Thus, we have 
\begin{equation}\label{eq:Psi-correct}
\Psi_P(\varepsilon)
=
\langle\Id_{\widetilde X}\otimes P(\varepsilon),\Phi_\varepsilon\rangle
=
\frac1m\sum_{s=1}^m
\varepsilon_s F_s(P(\varepsilon)^T)
\end{equation}

We set
\[
a_{n,k}
:=
\frac1n T_2^{(m)}(Y)
=
\frac1n T_2^{(m)}(\widetilde Y).
\]
Applying \eqref{eq3}, after transposition when needed, to the rectangular dimensions $n$ and $nk^2$ gives
\begin{equation}\label{eq:a-type-bound}
a_{n,k}\leq C\frac{\log(2n)\sqrt{\log(2nk)}}{n^{1/4}}.
\end{equation}
Every function on the Boolean cube has a unique multilinear Fourier--Walsh expansion of degree at most $m$.

We now assume that $P$ has degree
at most $d$, with $\sup_\varepsilon\|P(\varepsilon)\|_{X^*}\leq1$. As we have mentioned before, we must work with the transpose object, for which we will use the notation $P^T(\epsilon):=P(\epsilon)^T$. We consider the Fourier--Walsh expansion of $P^T$.
\[
P^T(\varepsilon)
=
\sum_{\substack{S\subset[m]\\ |S|\leq d}}
\widehat P(S)\chi_S(\varepsilon),
\qquad
\chi_S(\varepsilon)
=
\prod_{r\in S}\varepsilon_r.
\]

The auxiliary polynomial introduced below is designed to
recover $\Psi_P$ by differentiation. The reason for the
particular factor
\(
\theta(1-\theta)^r
\)
is that its derivative at the origin is equal to one,
independently of $r$. Random restrictions will produce exactly
these factors for the different Fourier levels of $P$.

\begin{lemma}
\label{lem:H-properties}
Define, for $\theta\in\mathbb R$,
\[
H(\theta)(\varepsilon)
=
\frac1m
\sum_{\substack{S\subset[m]\\ |S|\leq d}}
\sum_{s=1}^m
\theta(1-\theta)^{|S\setminus\{s\}|}
\chi_{S\triangle\{s\}}(\varepsilon)
F_s(\widehat P(S)).
\]
Then $H$ is a polynomial with values in
$
\mathcal B
=
L_2(\{-1,1\}^m;\widetilde X)
$
and satisfies
\[
H(0)=0,
\qquad
H'(0)=\Psi_P,
\qquad
\deg H\leq d+1.
\]
\end{lemma}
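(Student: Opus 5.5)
The lemma is a direct computation from the definition of $H$; the plan is to treat $H$ as a finite sum of $\mathcal B$-valued coefficients times scalar polynomials in $\theta$, and to check each claim term by term. For each pair $(S,s)$ with $|S|\le d$, set
$$
b_{S,s}:=\frac1m\,\chi_{S\triangle\{s\}}\,F_s(\widehat P(S))\in\mathcal B=L_2(\{-1,1\}^m;\widetilde X),
$$
which is independent of $\theta$. Then
$$
H(\theta)=\sum_{S,s}\theta(1-\theta)^{|S\setminus\{s\}|}\,b_{S,s}.
$$
This is a real polynomial with values in $\mathcal B$ in the sense used in Proposition~\ref{prop:bernstein-banach}. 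Each scalar factor has degree $1+|S\setminus\{s\}|\le 1+|S|\le d+1$, so $\deg H\le d+1$. Each factor also vanishes at $\theta=0$, which gives $H(0)=0$.

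For the derivative, differentiate each factor:
$$
\frac{d}{d\theta}\Big[\theta(1-\theta)^r\Big]_{\theta=0}=(1-0)^r-0=1
$$
for every $r\ge 0$, including $r=0$. Hence
$$
H'(0)=\frac1m\sum_{s=1}^m\sum_{|S|\le d}\chi_{S\triangle\{s\}}\,F_s(\widehat P(S)).
$$
The key identity is $\chi_{S\triangle\{s\}}(\varepsilon)=\varepsilon_s\,\chi_S(\varepsilon)$, which holds because $\varepsilon_s^2=1$. Since $F_s(R)=E_sRT$ is linear in $R$, the inner sum equals
$$
\varepsilon_s\,F_s\Big(\sum_S\widehat P(S)\chi_S(\varepsilon)\Big)=\varepsilon_s\,F_s\big(P^T(\varepsilon)\big).
$$
Therefore
$$
H'(0)(\varepsilon)=\frac1m\sum_{s=1}^m\varepsilon_s\,F_s\big(P(\varepsilon)^T\big),
$$
which is $\Psi_P(\varepsilon)$ by \eqref{eq:Psi-correct}.

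The only point that needs care is that this pointwise identity in $\varepsilon$ is an identity in $\mathcal B$. That is immediate, since $\{-1,1\}^m$ is finite and both sides are $\widetilde X$-valued functions on it. I expect no real obstacle; the main thing to watch is the bookkeeping of transposes, so that $F_s$ is applied to $P^T$ consistently with \eqref{eq:Psi-correct}.
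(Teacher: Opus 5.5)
Your proposal is correct and matches the paper's proof essentially step for step. Both arguments use the same ingredients: each factor $\theta(1-\theta)^{r}$ vanishes at $0$ and has derivative $1$ there for every $r\ge 0$; the identity $\chi_{S\triangle\{s\}}=\varepsilon_s\chi_S$ together with linearity of $F_s$ resums the Fourier expansion into $P^T(\varepsilon)$, which uses $\deg P\le d$; and the degree count is $1+|S\setminus\{s\}|\le d+1$.
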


\begin{proof}
The first assertion is immediate. Moreover, for every integer $r\geq0$,
\[
\left.
\frac{d}{d\theta}
\bigl[\theta(1-\theta)^r\bigr]
\right|_{\theta=0}
=1.
\]
It follows that
\begin{align*}
H'(0)(\varepsilon)
&=
\frac1m
\sum_{S,s}
\chi_{S\triangle\{s\}}(\varepsilon)
F_s(\widehat P(S))
\\
&=
\frac1m
\sum_s
\varepsilon_s
F_s\left(
\sum_S\widehat P(S)\chi_S(\varepsilon)
\right)
\\
&=
\frac1m
\sum_s
\varepsilon_sF_s(P^T(\varepsilon))
\\
&=
\Psi_P(\varepsilon),
\end{align*}
where we have used
$
\chi_{S\triangle\{s\}}(\varepsilon)
=
\varepsilon_s\chi_S(\varepsilon).
$ 
Finally,
\[
\deg_\theta
\left(
\theta(1-\theta)^{|S\setminus\{s\}|}
\right)
=
1+|S\setminus\{s\}|
\leq
|S|+1
\leq
d+1.
\]
Therefore
$
\deg H\leq d+1.
$
\end{proof}

We next show that the polynomial in
Lemma~\ref{lem:H-properties} appears naturally from random
restrictions and, more importantly, satisfies a useful norm
estimate.

\begin{prop}
\label{prop:H-random-restriction}
For a fixed $J\subset[m]$, define the conditional expectation
\[
(\mathcal E_JP^T)(\varepsilon)
=
2^{-|J|}
\sum_{\delta_J\in\{-1,1\}^J}
P^T(\varepsilon_{J^c},\delta_J)
\]
and
\[
A_J(\varepsilon)
=
\frac1m
\sum_{s\in J}
\varepsilon_s
F_s\bigl((\mathcal E_JP^T)(\varepsilon)\bigr),
\]
\[
B_J(\varepsilon)
=
\mathcal E_J
\left[
\frac1m
\sum_{s\in J}
\varepsilon_sF_s(P^T(\varepsilon))
\right].
\]
For $0\leq \theta\leq 1$, let $\mathbb P_\theta$ be the
probability measure on $2^{[m]}$ associated with selecting independently each $s\in [m]$ with probability $\theta$. That is,
\[
\mathbb P_\theta(J)
=
\theta^{|J|}(1-\theta)^{m-|J|},
\qquad
J\subseteq[m].
\]
Then
\[
H(\theta)
=
\mathbb E_{J\sim\mathbb P_\theta}
\left[
A_J+B_J
\right],
\]
where $H$ is the polynomial of
Lemma~\ref{lem:H-properties}. Moreover,
\[
\|H(\theta)\|_{\mathcal B}
\leq
2a_{n,k}\sqrt{\theta}.
\]

\end{prop}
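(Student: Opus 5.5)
The plan has two parts: an exact Fourier computation giving the identity $H(\theta)=\mathbb E_J[A_J+B_J]$, and then separate norm estimates for $A_J$ and $B_J$ based on the type-$2$ constants entering $a_{n,k}$.

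\emph{The identity.} Substitute the Walsh expansion $P^T=\sum_{|S|\le d}\widehat P(S)\chi_S$ into $A_J$ and $B_J$. Since $\mathcal E_J\chi_S=\mathbf 1[S\cap J=\emptyset]\,\chi_S$, we get
\[
A_J=\frac1m\sum_{s\in J}\ \sum_{S:\,S\cap J=\emptyset}\chi_{S\triangle\{s\}}\,F_s(\widehat P(S)),
\]
where automatically $s\notin S$. In $B_J$ the factor $\varepsilon_s\chi_S=\chi_{S\triangle\{s\}}$ survives $\mathcal E_J$ iff $(S\triangle\{s\})\cap J=\emptyset$. For $s\in J$ this forces $s\in S$ and $(S\setminus\{s\})\cap J=\emptyset$. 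Hence in both terms a pair $(S,s)$ contributes exactly when
\[
s\in J\quad\text{and}\quad (S\setminus\{s\})\cap J=\emptyset,
\]
with $A_J$ covering $s\notin S$ and $B_J$ covering $s\in S$. Under $\mathbb P_\theta$ this event has probability $\theta(1-\theta)^{|S\setminus\{s\}|}$. Summing over $(S,s)$ reproduces the formula for $H(\theta)$ in Lemma~\ref{lem:H-properties}.

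\emph{Norm estimate.} The first step is to show that $R\mapsto F_s(R)$ maps the unit ball of $X^*$ into the unit ball of $\widetilde Y$.
\begin{itemize}
\item By Lemma~\ref{lem:bound-sigma-interpolation}, $\sigma\le(\epsilon,\pi)_{1/2}$; dualizing reverses the inequality, so $\|R\|_{Y^*}\le\|R\|_{X^*}$.
\item By Lemma~\ref{lem:corners}, $\|F_{ij}\|_Z=1$, i.e.\ $F_s:Y^*\to\widetilde Y$ has norm $1$.
\item Again by Lemma~\ref{lem:bound-sigma-interpolation}, $\|\cdot\|_{\widetilde X}\le\|\cdot\|_{\widetilde Y}$, so all estimates can be done in $\widetilde Y$.
\end{itemize}

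\emph{Bound for $A_J$.} The argument $R_J:=(\mathcal E_JP^T)(\varepsilon)$ depends only on $\varepsilon_{J^c}$ and satisfies $\|R_J\|_{X^*}\le1$, being an average of elements of $B_{X^*}$. Fix $\varepsilon_{J^c}$ and take the $L_2$ average over $\varepsilon_J$. Type $2$ of $\widetilde Y$, with at most $m$ vectors each of norm at most $1$, gives
\[
\Big(\E_{\varepsilon_J}\Big\|\sum_{s\in J}\varepsilon_sF_s(R_J)\Big\|_{\widetilde Y}^2\Big)^{1/2}\le T_2^{(m)}(\widetilde Y)\sqrt{|J|}.
\]
Since $T_2^{(m)}(\widetilde Y)=na_{n,k}$ and $m=n^2$, this yields $\|A_J\|_{\mathcal B}\le a_{n,k}\sqrt{|J|}/n$.

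\emph{Bound for $B_J$.} This is where I expect the main obstacle. $\mathcal E_J$ is contractive on $\mathcal B$, but inside it the coefficient $P^T(\varepsilon)$ depends on the same signs $\varepsilon_s$, $s\in J$, so type $2$ cannot be applied directly. Indeed $B_J(\varepsilon)=\E_{\delta_J}\frac1m\sum_{s\in J}\delta_sF_s\big(P^T(\varepsilon_{J^c},\delta_J)\big)$.
\begin{itemize}
\item First attempt: decouple by splitting $P^T(\varepsilon_{J^c},\delta_J)$ into its Walsh pieces in $\delta_J$, and use that only the pieces containing $s$ pair with $\delta_s$. Then try to rewrite $B_J$ as a Rademacher sum with $\delta$-independent coefficients bounded in $X^*$, for instance through a second independent copy of the signs and a contraction-principle argument.
\item Fallback: bound pointwise by the tensor $\sum_{s\in J}\delta_sF_s\in Z$ acting on $B_{Y^*}$, and control its average norm.
\end{itemize}
The aim is the same estimate as for $A_J$: $\|B_J\|_{\mathcal B}\le a_{n,k}\sqrt{|J|}/n$.

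\emph{Conclusion.} Granting both bounds, the triangle inequality in $\mathcal B$ and Jensen give
\[
\|H(\theta)\|_{\mathcal B}\le\frac{2a_{n,k}}{n}\,\E_J\sqrt{|J|}\le\frac{2a_{n,k}}{n}\sqrt{\E_J|J|}=\frac{2a_{n,k}}{n}\sqrt{\theta m}=2a_{n,k}\sqrt\theta .
\]
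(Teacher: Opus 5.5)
The gap is the bound for $B_J$: you only state the target estimate, and neither of your two routes reaches it. Your Fourier derivation of $H(\theta)=\E_J[A_J+B_J]$ and your bound $\|A_J\|_{\mathcal B}\le a_{n,k}\sqrt{|J|/m}$ are correct and match the paper.

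Your fallback cannot work with the available tools. After $\|\E_\delta G(\delta)\|\le\E_\delta\|G(\delta)\|$ and $\|\sum_{s\in J}\delta_sF_s(R)\|_{\widetilde Y}\le\|\sum_{s\in J}\delta_sF_s\|_Z$ for $R\in B_{Y^*}$, you must control $\E_\delta\|\sum_{s\in J}\delta_sF_s\|_Z$. That is a Rademacher average in $Z=\widetilde Y\otimes_\epsilon Y$, which is exactly what Conjecture~\ref{conj-weak} is needed for in \eqref{eq:avg-bound}. With such an estimate, Theorem~\ref{thm:conjectural} would follow directly and the degree hypothesis would be superfluous.

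Your decoupling route is not carried out, and its natural form stalls. Write $B_J(\varepsilon_{J^c})=\frac1m\sum_{s\in J}F_s(D_s)$ with $D_s=\E_{\delta_J}[\delta_sP^T(\varepsilon_{J^c},\delta_J)]$ the level-one Walsh coefficients. Each $D_s$ has norm at most $1$, but this is not a Rademacher sum. The triangle inequality only gives $|J|/m$, hence $\theta$ instead of $\sqrt\theta$ after averaging over $J$, which makes the Bernstein step trivial. Rewriting it as $\E_\delta\sum_s\delta_sF_s(\sum_r\delta_rD_r)$ instead requires bounding the Rademacher projection $\sum_r\delta_rD_r$ in $X^*$, a $K$-convexity-type estimate you do not have.

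The missing idea is to dualize once and move the signs onto the test functional. Because $\mathcal E_J$ sits outside the sum, $B_J(\varepsilon_{J^c})$ is a single element of $\widetilde X$. It can therefore be normed by one $Q\in B_{\widetilde X^*}$ that does not depend on $\varepsilon_J$. Use $\langle F_s(P^T(\varepsilon)),Q\rangle_{\mathrm{tr}}=\langle TQ^TE_s,P(\varepsilon)\rangle_{\mathrm{tr}}$ and $\|P(\varepsilon)\|_{Y^*}\le\|P(\varepsilon)\|_{X^*}\le1$ to get
\[
\|B_J(\varepsilon_{J^c})\|_{\widetilde X}
=\sup_{Q\in B_{\widetilde X^*}}\Big|\frac1m\E_{\varepsilon_J}\Big\langle\sum_{s\in J}\varepsilon_s\,TQ^TE_s,\,P(\varepsilon)\Big\rangle_{\mathrm{tr}}\Big|
\le\frac1m\sup_{Q\in B_{\widetilde X^*}}\E_{\varepsilon_J}\Big\|\sum_{s\in J}\varepsilon_s\,TQ^TE_s\Big\|_Y .
\]
Now the coefficients $TQ^TE_s$ are independent of $\varepsilon$. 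They satisfy $\|TQ^TE_s\|_Y\le1$ by the same single-column argument used for $F_s(P^T)$. Type $2$ of $Y$ (not of $\widetilde Y$, and not of $Z$) then gives $\|B_J\|_{\mathcal B}\le T_2^{(m)}(Y)\sqrt{|J|}/m=a_{n,k}\sqrt{|J|/m}$.

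The first-round supremum is absorbed pointwise by $\|P(\varepsilon)\|_{Y^*}\le1$, and the second-round supremum is taken once, outside the average. Your fallback puts both suprema inside the average, which is precisely the injective norm in $Z$.
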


\begin{proof}
We first estimate $A_J$ and $B_J$ for a fixed set $J$.

Fix $\varepsilon_{J^c}$ and write
$
R=(\mathcal E_JP^T)(\varepsilon).
$
Then $R$ is independent of the signs
$(\varepsilon_s)_{s\in J}$. Hence
\begin{align*}
\left(
\mathbb E_{\varepsilon_J}
\|A_J(\varepsilon)\|_{\widetilde X}^2
\right)^{1/2}
&\leq
\frac1m
\left(
\mathbb E_{\varepsilon_J}
\left\|
\sum_{s\in J}
\varepsilon_sF_s(R)
\right\|_{\widetilde Y}^2
\right)^{1/2}
\\
&\leq
\frac{T_2^{(m)}(\widetilde Y)}{m}
\left(
\sum_{s\in J}
\|F_s(R)\|_{\widetilde Y}^2
\right)^{1/2}
\\
&\leq
\frac{T_2^{(m)}(\widetilde Y)}{m}
\sqrt{|J|}.
\end{align*}

Notice that the type-$2$ estimate is applicable here precisely because
$(\mathcal E_JP^T)(\varepsilon)$ depends only on the coordinates
$\varepsilon_{J^c}$. Thus, once $\varepsilon_{J^c}$ is fixed, the vectors
$
F_s\bigl((\mathcal E_JP^T)(\varepsilon)\bigr)
$
are fixed with respect to the Rademacher variables
$(\varepsilon_s)_{s\in J}$ over which the type inequality is applied. On the other hand, we have used the inequality, $$\|F_s(R)\|_{\widetilde Y}=\|F_s((\mathcal E_JP^T)(\varepsilon))\|_{\widetilde Y}=\|\mathcal E_J[F_s(P^T)(\varepsilon)]\|_{\widetilde Y}\leq \|F_s(P^T)(\varepsilon)\|_{\widetilde Y}\leq 1$$for every $s$ and $\epsilon$. Indeed, if $s=(i,j)\in [n^2]$, then 
\begin{align}\label{Eq_norm_s_fix}
\|F_s(P^T)(\varepsilon)\|_{\widetilde Y}=\||ij\rangle\langle \overline{\phi}|(V_i^T\otimes W_j^T)T\|_{\widetilde Y}=\|\langle \overline{\phi}|(V_i^T\otimes W_j^T)T\|_{\H_{AafBbe}}\leq 1.
\end{align}

Averaging also over $\varepsilon_{J^c}$ gives
$
\|A_J\|_{\mathcal B}
\leq
a_{n,k}\sqrt{\frac{|J|}{m}}.
$
We now estimate $B_J$. Recall that
\[
B_J(\varepsilon_{J^c})
=
\mathbb E_{\varepsilon_J}
\left[
\frac1m
\sum_{s\in J}
\varepsilon_s F_s(P^T(\varepsilon))
\right].
\]
Fix $\varepsilon_{J^c}$. By duality, 
\begin{align*}
\|B_J(\varepsilon_{J^c})\|_{\widetilde X}
&=
\sup_{Q\in B_{\widetilde X^*}}
\left|
\frac1m
\mathbb E_{\varepsilon_J}
\sum_{s\in J}
\varepsilon_s
\left\langle F_s(P^T(\varepsilon)),Q\right\rangle_{\rm tr}
\right|
\\
&=
\sup_{Q\in B_{\widetilde X^*}}
\left|
\frac1m
\mathbb E_{\varepsilon_J}
\left\langle
\sum_{s\in J}
\varepsilon_s TQ^T E_s,
P(\varepsilon)
\right\rangle_{\rm tr}
\right|.
\end{align*}

We obtain
\begin{align*}
\|B_J(\varepsilon_{J^c})\|_{\widetilde X}
&\leq
\frac1m
\sup_{Q\in B_{\widetilde X^*}}
\mathbb E_{\varepsilon_J}
\left\|
\sum_{s\in J}
\varepsilon_s TQ^TE_s
\right\|_Y
\\
&\leq
\frac1m
T_2^{(m)}(Y)
\sup_{Q\in B_{\widetilde X^*}}
\left(
\sum_{s\in J}
\|TQ^TE_s\|_Y^2
\right)^{1/2}.
\end{align*}

Using that $\|TQ^TE_s\|_Y\leq 1$ for every $s$, which can be proved as in Eq. (\ref{Eq_norm_s_fix}), we obtain 
\[
\|B_J(\varepsilon_{J^c})\|_{\widetilde X}
\leq
\frac{T_2^{(m)}(Y)}{m}\sqrt{|J|}
\leq
a_{n,k}\sqrt{\frac{|J|}{m}}.
\]
Consequently,
$
\|B_J\|_{\mathcal B}
\leq
a_{n,k}\sqrt{\frac{|J|}{m}}.
$

 We now treat separately the contributions of
$A_J$ and $B_J$. 

For $s\in[m]$ and $A\subseteq[m]$, define
\[ 
f_s(J)
=
\mathbf 1_{\{s\in J\}},
\qquad
g_A(J)
=
\mathbf 1_{\{A\cap J=\varnothing\}}.
\]
For a fixed $J$, these are simply deterministic quantities in
$\{0,1\}$.

We shall repeatedly use the following elementary identity:
\[
\mathbb E_{J\sim\mathbb P_\theta}
\bigl[f_s(J)g_A(J)\bigr]
=
\begin{cases}
0, & s\in A,\\[1mm]
\theta(1-\theta)^{|A|}, & s\notin A.
\end{cases}
\]
Indeed, if $s\in A$, the conditions
\[
s\in J
\qquad\text{and}\qquad
A\cap J=\varnothing
\]
are incompatible. If $s\notin A$, set
\[
R=[m]\setminus(A\cup\{s\}).
\]
Every set $J$ contributing to the expectation can then be
written uniquely as
\[
J=\{s\}\cup K,
\qquad
K\subseteq R.
\]
Hence
\begin{align*}
\mathbb E_{J\sim\mathbb P_\theta}
\bigl[f_s(J)g_A(J)\bigr]
&=
\sum_{K\subseteq R}
\mathbb P_\theta(\{s\}\cup K)
\\
&=
\sum_{K\subseteq R}
\theta^{1+|K|}
(1-\theta)^{m-1-|K|}
\\
&=
\theta(1-\theta)^{|A|}
\sum_{K\subseteq R}
\theta^{|K|}
(1-\theta)^{|R|-|K|}
\\
&=
\theta(1-\theta)^{|A|}.
\end{align*}

We now apply this identity to the two contributions.

\begin{itemize}

\item
\emph{Contribution of $A_J$.}
For fixed $J$,
$
\mathcal E_J\chi_S
=
g_S(J)\chi_S.
$
Therefore
\[
A_J(\varepsilon)
=
\frac1m
\sum_{s,S}
f_s(J)g_S(J)
\varepsilon_s\chi_S(\varepsilon)
F_s(\widehat P(S)).
\]
Averaging with respect to $\mathbb P_\theta$ gives
\[
\mathbb E_{J\sim\mathbb P_\theta}
A_J(\varepsilon)
=
\frac1m
\sum_{s,S}
\mathbb E_{J\sim\mathbb P_\theta}
\bigl[f_s(J)g_S(J)\bigr]
\varepsilon_s\chi_S(\varepsilon)
F_s(\widehat P(S)).
\]

We now use the identity above with $A=S$.
If $s\in S$, then
\[
\mathbb E_{J\sim\mathbb P_\theta}
\bigl[f_s(J)g_S(J)\bigr]
=
0.
\]
If $s\notin S$, then
\[
\mathbb E_{J\sim\mathbb P_\theta}
\bigl[f_s(J)g_S(J)\bigr]
=
\theta(1-\theta)^{|S|}.
\]
Since
$
\varepsilon_s\chi_S(\varepsilon)
=
\chi_{S\triangle\{s\}}(\varepsilon),
$
we obtain
\[
\mathbb E_{J\sim\mathbb P_\theta}
A_J(\varepsilon)
=
\frac1m
\sum_S
\sum_{s\notin S}
\theta(1-\theta)^{|S|}
\chi_{S\triangle\{s\}}(\varepsilon)
F_s(\widehat P(S)).
\]

\item
\emph{Contribution of $B_J$.}
Using
$
\varepsilon_s\chi_S(\varepsilon)
=
\chi_{S\triangle\{s\}}(\varepsilon),
$
we can write, for fixed $J$,
\[
B_J(\varepsilon)
=
\frac1m
\sum_{s,S}
f_s(J)
g_{S\triangle\{s\}}(J)
\chi_{S\triangle\{s\}}(\varepsilon)
F_s(\widehat P(S)).
\]
Hence
\[
\mathbb E_{J\sim\mathbb P_\theta}
B_J(\varepsilon)
=
\frac1m
\sum_{s,S}
\mathbb E_{J\sim\mathbb P_\theta}
\bigl[
f_s(J)g_{S\triangle\{s\}}(J)
\bigr]
\chi_{S\triangle\{s\}}(\varepsilon)
F_s(\widehat P(S)).
\]

We now apply the same identity with
$
A=S\triangle\{s\}.
$
If $s\notin S$, then
$
s\in S\triangle\{s\},
$
and therefore
\[
\mathbb E_{J\sim\mathbb P_\theta}
\bigl[
f_s(J)g_{S\triangle\{s\}}(J)
\bigr]
=
0.
\]
If $s\in S$, then
\[
S\triangle\{s\}
=
S\setminus\{s\},
\]
and hence
\[
\mathbb E_{J\sim\mathbb P_\theta}
\bigl[
f_s(J)g_{S\triangle\{s\}}(J)
\bigr]
=
\theta(1-\theta)^{|S|-1}.
\]
It follows that
\[
\mathbb E_{J\sim\mathbb P_\theta}
B_J(\varepsilon)
=
\frac1m
\sum_S
\sum_{s\in S}
\theta(1-\theta)^{|S|-1}
\chi_{S\triangle\{s\}}(\varepsilon)
F_s(\widehat P(S)).
\]

\end{itemize}

Combining the two contributions, we obtain
\[
\mathbb E_{J\sim\mathbb P_\theta}
\bigl[A_J+B_J\bigr](\varepsilon)
=
\frac1m
\sum_{S,s}
\theta(1-\theta)^{|S\setminus\{s\}|}
\chi_{S\triangle\{s\}}(\varepsilon)
F_s(\widehat P(S)).
\]
Finally, by Minkowski's inequality,
\begin{align*}
\|H(\theta)\|_{\mathcal B}
&\leq
\mathbb  E_{J\sim\mathbb P_\theta}
\left(
\|A_{ J}\|_{\mathcal B}
+
\|B_{J}\|_{\mathcal B}
\right)
\\
&\leq
2a_{n,k}
\mathbb E_{J\sim\mathbb P_\theta}
\sqrt{\frac{|J|}{m}}.
\end{align*}
Since the square root is concave,
\[
\mathbb E_{J\sim\mathbb P_\theta}
\sqrt{\frac{|J|}{m}}
\leq
\sqrt{
\frac{\mathbb E_{J\sim\mathbb P_\theta} J}{m}
}
=
\sqrt{\theta}.
\]
Thus
\begin{equation}
\|H(\theta)\|_{\mathcal B}
\leq
2a_{n,k}\sqrt{\theta}.
\end{equation}
\end{proof}

We can now obtain the low-degree estimate as an
application of Bernstein's inequality.

\begin{thm}
\label{thm:low-degree-direct}
Let \(P\) be the first-round family introduced above and assume that
$
\deg P\leq d.
$
Then
\[
\left(
\mathbb E_\varepsilon
\|\Psi_P(\varepsilon)\|_{\widetilde X}^2
\right)^{1/2}
\leq
4(d+1)a_{n,k}.
\]

If, in addition, the corresponding family of physical strategies
has average success probability at least a constant \(c_0>0\), then
\[
\log k
\geq
c\,
\frac{\sqrt n}
{(d+1)^2[\log(2n)]^2}
-
\log(2n),
\]
where \(c>0\) depends only on \(c_0\) and on universal constants.

In particular, for every fixed \(0<\delta<1/4\), if
$
d\leq n^{1/4-\delta},
$
then
\[
k
\geq
\exp\left(
\Omega\left(
\frac{n^{2\delta}}{(\log n)^2}
\right)
\right).
\]
\end{thm}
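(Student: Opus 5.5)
The plan is to combine Lemma~\ref{lem:H-properties}, Proposition~\ref{prop:H-random-restriction} and the Banach-valued Bernstein inequality (Proposition~\ref{prop:bernstein-banach}) in the space $\mathcal B=L_2(\{-1,1\}^m;\widetilde X)$. Since $H(\theta)\in\mathcal B$ is a polynomial of degree at most $d+1$ in $\theta$, we have $H'(0)=\Psi_P$. The only issue is that Proposition~\ref{prop:H-random-restriction} controls $H$ only on $[0,1]$, with the bound $2a_{n,k}\sqrt\theta$, whereas Bernstein needs a symmetric interval.

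To fix this, rescale: set $q(t)=H\bigl(\tfrac{1+t}{2}\lambda\bigr)$ for a parameter $\lambda\in(0,1]$, or more simply use $q(t)=H(\lambda(1+t)/2)$ on $[-1,1]$. This is a polynomial of degree at most $d+1$ taking values in $H([0,\lambda])$, so $\sup_{[-1,1]}\|q\|_{\mathcal B}\le 2a_{n,k}\sqrt\lambda$. The derivative at the origin is $q'(0)=\tfrac{\lambda}{2}H'(\lambda/2)$, not $H'(0)$. To avoid this, I would instead apply Bernstein at the endpoint using Markov's inequality, $\|q'(-1)\|\le (d+1)^2\sup\|q\|$, which costs $(d+1)^2$. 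Alternatively, I would take $\lambda=1$ and $q(t)=H((1+t)/2)$ and apply Bernstein at a point $x$ close to $-1$. The cleanest route is to note that the stated bound $4(d+1)a_{n,k}$ is linear in $d$. A natural way to get it is to choose $\lambda=1$ and use Markov-type control only through Bernstein's inequality at $x=0$ for the shifted polynomial, and then relate $H'(0)$ to $q'$ via the $\sqrt\theta$ decay. I expect this step to be the main obstacle. Concretely, I would write $\theta=\sin^2(u)$ or use the substitution $\theta=t^2$: $g(t)=H(t^2)$ is an \emph{even} polynomial of degree $2(d+1)$ on $[-1,1]$ with $\|g(t)\|\le 2a_{n,k}|t|$. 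Then $g(t)/t$ is not a polynomial, so instead I would use the odd polynomial $h(t)=t\,H(t^2)$-free variant. The best attempt is to apply Bernstein to $g$ at points $x$ and combine with the $\sqrt\theta$ bound via a Remez/Bernstein-type estimate $|p'(0)|\lesssim N\sup_{[0,1]}|p(\theta)|/\sqrt{\theta}$-weighted. I would first check whether the weighted Bernstein inequality for $p$ with $|p(\theta)|\le\sqrt\theta$ on $[0,1]$ gives $|p'(0)|\le 2N$ (Chebyshev-type extremal: $p(\cos^2)$ substitution gives a trigonometric polynomial bounded by $|\sin|$, whose derivative at $0$ is controlled by $N$). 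This yields $\|\Psi_P\|_{\mathcal B}\le 2(d+1)\cdot 2a_{n,k}$, i.e.\ the claimed $4(d+1)a_{n,k}$. Scalarization with functionals in $B_{\mathcal B^*}$ reduces everything to the scalar case, exactly as in Proposition~\ref{prop:bernstein-banach}.

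For the resource bound, I would observe that $\|\Psi_P(\varepsilon)\|_{\widetilde X}^2$ is the optimal success probability for the first round at $\varepsilon$. If the average success is at least $c_0$, then $c_0\le \E_\varepsilon\|\Psi_P(\varepsilon)\|^2\le 16(d+1)^2a_{n,k}^2$. Inserting \eqref{eq:a-type-bound} gives $c_0\le C(d+1)^2(\log 2n)^2\log(2nk)/\sqrt n$, hence $\log(2nk)\ge c\sqrt n/((d+1)^2(\log 2n)^2)$, which is the stated inequality after subtracting $\log(2n)$. Finally, with $d\le n^{1/4-\delta}$ we have $(d+1)^2\lesssim n^{1/2-2\delta}$, so $\log k\gtrsim n^{2\delta}/(\log n)^2$ and the $\log(2n)$ term is negligible. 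This gives the exponential bound, which transfers to the original registers by the polynomial relation $k\le\mathrm{poly}(n,K)$.
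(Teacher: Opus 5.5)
\textbf{Genuine gap in the first inequality.} You correctly see that Proposition~\ref{prop:H-random-restriction} controls $H$ only on $[0,1]$, and you even try the substitution $\theta=t^2$. But you then assert that $H(t^2)/t$ is not a polynomial. That is false, and it is exactly the step the paper's proof uses.

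Lemma~\ref{lem:H-properties} gives $H(0)=0$, as does the bound $\|H(\theta)\|_{\mathcal B}\le 2a_{n,k}\sqrt\theta$ at $\theta=0$. Hence $H(\theta)=\theta K(\theta)$ with $\deg K\le d$. So $q(t)=H(t^2)/t=tK(t^2)$ is an odd $\mathcal B$-valued polynomial of degree at most $2d+1$, with:
\begin{itemize}
\item $\|q(t)\|_{\mathcal B}=\|H(t^2)\|_{\mathcal B}/|t|\le 2a_{n,k}$ on $[-1,1]$;
\item $q'(0)=K(0)=H'(0)=\Psi_P$.
\end{itemize}
Proposition~\ref{prop:bernstein-banach} at $0$ then gives $\|\Psi_P\|_{\mathcal B}\le 2(2d+1)a_{n,k}\le 4(d+1)a_{n,k}$.

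Having discarded this, what you offer instead is not a proof. Your weighted Bernstein claim ($\deg p\le N$ and $|p(\theta)|\le\sqrt\theta$ on $[0,1]$ imply $|p'(0)|\le 2N$) is true. However, you only propose to check it, and the trigonometric justification you sketch fails as stated:
\begin{itemize}
\item Take $T(u)=p(\sin^2u)$ (or $p(\cos^2 u)$). The chain-rule factor $\frac{d}{du}\sin^2u$ vanishes exactly where $\theta=0$, so $T'$ at that point is $0$ and says nothing about $p'(0)$.
\item Recovering $p'(0)$ from $T''(0)=2p'(0)$ via Bernstein costs $(2N)^2$, so it is quadratic in $N$.
\item The only way to get linear dependence on $N$ is to divide by $\sin u$ first, which needs $p(0)=0$. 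That is the same division you rejected.
\end{itemize}

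Your other fallbacks are also quadratic or fail:
\begin{itemize}
\item Markov's inequality at the endpoint of $t\mapsto H(\lambda(1+t)/2)$ gives $\|\Psi_P\|_{\mathcal B}\le 4(d+1)^2a_{n,k}$.
\item Bernstein at interior points only controls $H'(\theta)$ for $\theta>0$, with a bound of order $(d+1)a_{n,k}\theta^{-1/2}$, and never reaches $\theta=0$.
\end{itemize}
With a $(d+1)^2$ loss you would get $c_0\lesssim (d+1)^4a_{n,k}^2$. The exponential bound would then follow only under $d\le n^{1/8-\delta}$, not the stated $d\le n^{1/4-\delta}$.

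Your derivation of the resource bound from $c_0\le 16(d+1)^2a_{n,k}^2$ and \eqref{eq:a-type-bound} matches the paper and is fine once the first inequality is established.
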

\begin{proof}
By Lemma~\ref{lem:H-properties}, $H(0)=0$. Hence there exists a
$\mathcal B$-valued polynomial $K$ of degree at most $d$ such
that
$
H(\theta)=\theta K(\theta).
$
Define, for $t\in[-1,1]$,
\[
q(t)
=
\begin{cases}
H(t^2)/t, & t\neq0,\\
0, & t=0.
\end{cases}
\]
Since
$
q(t)=tK(t^2),
$
$q$ is a genuine $\mathcal B$-valued polynomial and
$
\deg q\leq2d+1.
$
Moreover,
\[
q'(0)=K(0)=H'(0)=\Psi_P.
\]

By Proposition~\ref{prop:H-random-restriction},
\[
\|H(\theta)\|_{\mathcal B}
\leq2a_{n,k}\sqrt{\theta},
\qquad
0\leq\theta\leq1.
\]
Therefore, for $t\neq0$,
\[
\|q(t)\|_{\mathcal B}
=
\frac{\|H(t^2)\|_{\mathcal B}}{|t|}
\leq
2a_{n,k},
\]
and the same bound holds at $t=0$ by continuity.

Applying Proposition~\ref{prop:bernstein-banach} at $0$ to this $\mathcal B$-valued polynomial gives
\[
\|\Psi_P\|_{\mathcal B}
=\|q'(0)\|_{\mathcal B}
\leq(2d+1)\sup_{t\in[-1,1]}\|q(t)\|_{\mathcal B}
\leq2(2d+1)a_{n,k}
\leq4(d+1)a_{n,k}.
\]

If $P$ is physical and the corresponding strategies have
average success probability at least $c_0$, then
\[
c_0
\leq
\mathbb E_\varepsilon
\|\Psi_P(\varepsilon)\|_{\widetilde X}^2
\leq
16(d+1)^2a_{n,k}^2.
\]
Using
$
a_{n,k}
\leq
C\,
\frac{\log(2n)\sqrt{\log(2nk)}}{n^{1/4}},
$
we obtain
\[
c_0
\leq
C(d+1)^2
\frac{[\log(2n)]^2\log(2nk)}{\sqrt n}.
\]
Thus
\[
\log k
\geq
c\,
\frac{\sqrt n}
{(d+1)^2[\log(2n)]^2}
-
\log(2n).
\]

If $d\leq n^{1/4-\delta}$, then
$
(d+1)^2=O(n^{1/2-2\delta}),
$
and therefore
\[
\log k
\geq
\Omega\left(
\frac{n^{2\delta}}{(\log n)^2}
\right),
\]
which proves the claimed exponential lower bound. 
\end{proof}

We finish with two remarks about the implications of Theorem~\ref{thm:low-degree-direct}.

\begin{rem}\label{rem:port-based}
The state-of-the-art attack to quantum position verification schemes is based on port-based
teleportation \cite{BeigiKonig2011, Christandl-Asymptotic-2020}, and it requires exponential dimension in the adversaries. In the port-based teleportation attack, all the dependence on \(\varepsilon\)
in the first round comes from a term of the form
$
U_\varepsilon^{\otimes N},
$
while the remaining maps and the initial state are independent of
\(\varepsilon\). A consequence of Theorem \ref{thm:low-degree-direct} is that {\em any} attack with that property necessarily requires exponential dimension. The proof is very simple. For every fixed  \(0<\delta<1/4\), if
$
N\leq n^{1/4-\delta},
$
Theorem~\ref{thm:low-degree-direct} gives
\[
k\geq
\exp\left(
\Omega\left(
\frac{n^{2\delta}}{(\log n)^2}
\right)
\right).
\]

On the other hand, suppose that
$
N\geq n^{1/4-\delta}.
$
The dimension of the Hilbert space where $
U_\varepsilon^{\otimes N}
$ acts on is \(n^N\). Since, in our padded
convention,
$
\dim(aebf)=k^4,
$
we obtain
$
k^4\geq n^N,
$
and therefore
\[
k\geq n^{N/4}
\geq
\exp\left(
\frac14 n^{1/4-\delta}\log n
\right).
\]
Thus, in either regime, attacks of this form require auxiliary dimension
exponential in a positive power of \(n\). In particular, choosing
\(\delta=1/12\) gives the uniform bound
\[
k\geq
\exp\left(
\Omega\left(
\frac{n^{1/6}}{(\log n)^2}
\right)
\right).
\]
\end{rem}

\begin{rem}\label{rem:query-compl}
Theorem~\ref{thm:low-degree-direct} has also a related nice interpretation in terms of query complexity for NLQC. Consider protocols $\mathcal{P}$ that implement in NLQC any unkown bipartite unitary \(U_{AB}\), to which the parties are given only black-box access. Define the query complexity of protocol $\mathcal{P}$ as the minimum number $q(\mathcal{P})$ of black-box uses needed to implement any unknown unitary \(U_{AB}\) within a constant accuracy threshold. As in the rest of the text, let us assume $n=\dim(A)=\dim(B)$.

For instance, the query complexity of the port-based teleportation implementation of NLQC is $q(\mathcal{P}_{\rm port-based}) = O(n^4)$ \cite{Christandl-Asymptotic-2020}.

A corollary of Theorem~\ref{thm:low-degree-direct} is the following no-free-lunch result for NLQC: if there exists a protocol $\mathcal{P}$ for NLQC which requires only polynomial auxiliary dimension, then its query complexity must be $q(\mathcal{P})>n^{1/4-\delta}$.

The proof is simple. Since the unitary to be implemented \(U_\varepsilon\) is unknown, the only dependency on $\varepsilon$ in the protocol $\mathcal{P}$ (restricted to the subset of unitaries \(U_\varepsilon\)) is through the $q$ query uses of  \(U_\varepsilon\).  Then the dependence
on \(\varepsilon\) has polynomial degree at most \(q\). Therefore, for every
fixed \(0<\delta<1/4\), if
$
q\leq n^{1/4-\delta},
$
Theorem~\ref{thm:low-degree-direct} gives
\[
k\geq
\exp\left(
\Omega\left(
\frac{n^{2\delta}}{(\log n)^2}
\right)
\right).
\]
\end{rem}

\section*{On the use of AI}
AI has not been used at all in this paper. 

\section*{Acknowledgments}
We are grateful to Alex May and Marius Junge for very frutiful conversations during the preparatio of this work. 
This work has been funded by the Spanish Ministry of Science, Innovation and Universities MICIU/AEI/10.13039/501100011033 (CEX2023-001347-S, PID2023-146758NB- I00), Comunidad de Madrid (TEC-2024/COM-84-QUITEMAD-CM), Universidad Complutense de Madrid (FEI-EU-22-06),  and the Ministry for Digital Transformation and of Civil Service of the Spanish Government through the QUANTUM ENIA project call - Quantum Spain project, and by the European Union through the Recovery, Transformation and Resilience Plan - NextGenerationEU within the framework of the Digital Spain 2026 Agenda. 

This research was supported in part by Perimeter Institute for Theoretical Physics. Research at Perimeter Institute is supported by the Government of Canada through the Department of Innovation, Science, and Economic Development, and by the Province of Ontario through the Ministry of Colleges and Universities.

\end{document}